\theoremstyle{plain}
\newtheorem{theorem}{Theorem}
\newtheorem*{theorem*}{Theorem}
\newtheorem{lemma}[theorem]{Lemma}
\newtheorem*{lemma*}{Lemma}
\newtheorem{proposition}[theorem]{Proposition}
\newtheorem*{proposition*}{Proposition}
\newtheorem{definition}{Definition}
\renewcommand{\p@subsection}{}
\renewcommand{\p@subsubsection}{}
\newcommand*{\ketbra}[2]{\lvert #1 \rangle\!\langle #2 \rvert}
\def\tr{{\rm Tr}}
\def\eps{\epsilon}
\def\id{\mathbbm{1}}
\def\Dh{D_{\rm H}}
\def\kb{k_{\rm B}}
\def\Wext{W_{\rm gain}^{\eps}}
\def\Wform{W_{\rm cost}^{\eps}}
\def\comp{+}
\def\gorder{\xmapsto{\beta,\mu}}
\def\eqmaj{\succ_{\beta,\mu}}
\pgfplotsset{compat=1.3}
\pgfplotsset{width=0.5*\textwidth}
\newcommand\footnoteref[1]{\protected@xdef\@thefnmark{\ref{#1}}\@footnotemark}
\begin{document}

%
%
\title{Beyond heat baths: Generalized resource theories for small-scale thermodynamics}

%
%
\author{Nicole~Yunger~Halpern\footnote{E-mail: nicoleyh@caltech.edu}}
\affiliation{Institute for Quantum Information and Matter, Caltech, Pasadena, CA 91125, USA}
\affiliation{Perimeter Institute for Theoretical Physics, 31 Caroline Street North, Waterloo, Ontario Canada N2L 2Y5}

\author{Joseph~M.~Renes\footnote{E-mail: renes@phys.ethz.ch}}
\affiliation{Institute for Theoretical Physics, ETH Z\"{u}rich, Switzerland}

\date{\today}

\pacs{
05.70.Ce, 
89.70.Cf, 
05.70.-a, 
03.67.-a 
}

\keywords{
Resource theory,
One-shot,
Statistical mechanics,
Thermodynamics,
Information theory,
Nonequilibrium
}

%
%
\begin{abstract}
Thermodynamics has recently been extended to small scales 
with \emph{resource theories} that model heat exchanges.
Real physical systems exchange diverse quantities: heat, particles, angular momentum, etc.
We generalize thermodynamic resource theories 
to exchanges of observables other than heat, to baths other than heat baths, 
and to free energies other than the Helmholtz free energy. These generalizations are illustrated with ``grand-potential'' theories that model movements of heat and particles. Free operations include unitaries that conserve energy and particle number. From this conservation law and from resource-theory principles, the grand-canonical form of the free states is derived. States are shown to form a quasiorder characterized by free operations, $d$-majorization, the hypothesis-testing entropy, and rescaled Lorenz curves. We calculate the work distillable from, and we bound the work cost of creating, a state. These work quantities can differ but converge to the grand potential in the thermodynamic limit. Extending thermodynamic resource theories beyond heat baths, we open diverse realistic systems to modeling with one-shot statistical mechanics. Prospective applications such as electrochemical batteries are hoped to bridge one-shot theory to experiments.
\end{abstract}

\maketitle

%
%
%
%
\section{Introduction}
Advances in small-scale experiments and in quantum information have generated interest in ``thermodynamics without the thermodynamic limit.'' Recent experiments involve molecular motors and ratchets~\cite{LacosteLM08,SerreliLKL07}, optical thermal ratchets~\cite{Faucheux95}, the unfolding of one DNA or RNA molecule~\cite{BustamanteLP05,CollinRJSTB05,LiphardtDSTB02,AlemanyR10}, and nanoscale walkers~\cite{ChengSHEL12}. Analyses of these experiments feature thermodynamic concepts such as heat, work, and equilibrium. These concepts are not well-defined outside the thermodynamic limit of $n \to \infty$ particles. Hence the experimental advances in single-molecule manipulations invite us to extend thermodynamics to small scales.

The resource-theory framework developed in quantum information theory has recently been successfully applied to this problem. Resource theories have been used to calculate how efficiently scarce quantities can be distilled and transformed via cheap, or ``free,'' operations~\cite{CoeckeFS14}. Perhaps the most famous example is the resource theory of pure bipartite entanglement (which we will call ``entanglement theory'')~\cite{HorodeckiHHH09}. In entanglement theory, agents distill Bell pairs of maximally entangled qubits, usable to simulate quantum channels, from partially entangled states via local operations and classical communications (LOCC). Other resource theories quantify the values of asymmetry~\cite{BartlettRS07,marvian_theory_2013,BartlettRST06}, quantum-computation tools~\cite{VeitchMGE14}, and information~\cite{HHOShort,HHOLong,GourMNSYH13}. Benefits of the resource-theory framework include its operational formulation and the explicit modeling of all resources with physical degrees of freedom. 

To an agent with access to a heat bath, nonequilibrium states have value because work can be extracted from them and stored in a battery. Nonequilibrium states' values have been quantified with a family of equivalent resource theories, each associated with an inverse temperature $\beta$ of the bath~\cite{Janzing00,FundLimits2,BrandaoHORS13,BrandaoHNOW13,HorodeckiO13,FaistOR14}. We call these resource theories \emph{Helmholtz theories}, as the central results involve variations on the Helmholtz free energy $F := E - TS$.

Many experiments involve baths other than heat baths, involve interactions other heat exchanges, and are characterized by free energies other than the Helmholtz free energy. The Gibbs free energy 
$G  :=   E - TS  +  pV$ describes processes that occur at fixed temperatures and pressures, such as tabletop chemical reactions. The grand potential $\Phi  :=   E - TS - \mu N$ 
describes heat-and-particle exchanges; and other free energies describe electrochemistry, magnetic fields, mechanical stress and strain, etc.~\cite{Alberty01,Callen85}. 
Different types of baths (equivalently, different types of interactions, or different free energies) invite modeling by different families of resource theories. Each family's constituents correspond to different values of the bath's properties. For example, each member of the family of Helmholtz theories corresponds to one value of the inverse temperature $\beta$. Altogether, the families describing different baths form an extended family of thermodynamic resource theories amenable to experimental investigation in the present or near future. 

We introduce this extended family in this paper, illustrating the formalism with heat-and-particle exchanges. In grand-potential resource theories, free operations conserve energy and particle number. The only states that, if free, prevent such resource theories from being trivial are shown to be grand canonical ensembles
$e^{-\beta ( H - \mu N ) } / {Z}$. We derive the grand canonical ensemble upon establishing rigorously, using the resource-theory formalism, that the free states in Helmholtz theories are canonical ensembles
$e^{- \beta H} / Z$. States are shown to form a quasiorder characterized by a variant of majorization called \emph{$d$-majorization}, which is related to binary hypothesis testing. By exploiting the quasiorder, we calculate the work extractable from, and bound the work required to create, one copy of a state $R = (\rho, H, N)$, even by protocols that have a specified probability of failing. The work yield and work cost are shown to differ from each other in general, unlike in conventional thermodynamics, as observed in~\cite{FundLimits2}. In the limit as the number of copies of $R$ extracted from or created approaches infinity, the average work yield and work cost approach the difference $\Phi(\rho, H, N)  -  \Phi( {\gamma}_\rho, H, N)$ between the state's grand potential and the corresponding equilibrium state's grand potential.

We have structured our results as follows. In the next section, we review the resource-theory framework and define the family of generalized thermodynamic resource theories. We will illustrate the family with grand-potential theories. In Sec.~\ref{section:FreeStates}, we deduce the unique form that free states can assume in these theories. In Sec.~\ref{section:Quasiorder}, we interrelate the quasiorder of states, the generalized notion of majorization, and binary hypothesis testing. In Sec.~\ref{section:OneShotWork}, we define work in the resource-theory framework and use the quasiorder to determine the work yield and work cost of creating single instances of arbitrary states. In Sec.~\ref{section:ManyCopies}, we show that these one-shot work quantities imply asymptotic results similar to traditional thermodynamics.
We conclude by discussing possible applications of our generalized framework 
to real physical systems.

This work bridges the information-theoretic tool 
of thermodynamic resource theories to physical reality.
We pave the way for physical realizations, with experimental platforms, 
of entropic predictions about small scales.

%
%
%
%
\section{Thermodynamic resource theories}
\label{section:DefineTheories}
First, we introduce the resource-theory framework. We define generalized thermodynamic resource theories, then illustrate them with grand-potential theories.

%
%
\subsection{The resource-theory framework}
\label{sec:rtgen}

The resource-theory framework models experimental situations in which some physical transformations between quantum states are difficult, while others are easy~\cite{CoeckeFS14}. In quantum optics, for instance, generating coherent states is easy (e.g., using a laser), while generating number states is difficult. In entanglement theory, classical communication and physical operations on systems possessed by one party each (LOCC) are easy, while quantum operations on systems distributed amongst multiple parties are impossible. The states that are difficult to create can be regarded as resources since, with free operations, they can simulate difficult operations. Given a maximally entangled state, separated parties restricted to LOCC can implement a quantum channel.

A resource theory is defined by physical operations assumed to be easy, or \emph{free}. Free operations include the creation of \emph{free states}; all other states are resources. This definition specifies an ordering of states: States $A$ and $B$ are ordered as $A \mapsto B$ if free operations can create $B$ from $A$. The ordering $\mapsto$ is a quasiorder, satisfying reflexivity ($A\mapsto A$) and transitivity ($A \mapsto B$ and $B \mapsto C$ implies $A \mapsto C$). The quasiorder differs from a partial order: Even if $A\mapsto B$ and $B \mapsto A$, $A$ is not necessarily $B$~\cite{MarshallO79}. 

Functions of resources that respect the quasiorder, in the sense that $A\mapsto B$ implies $f(A)\geq f(B)$, are termed \emph{resource monotones}~\cite{MarshallOA10,BrandaoHNOW13,GourMNSYH13}. Simple sets of monotones completely characterize the quasiorders in some resource theories, including the resource theories in this paper. Many monotones have operational interpretations~\cite{BrandaoHNOW13,GourMNSYH13}.

We are interested only in resource theories whose quasiorders are nontrivial---in which some transformations between some resources are impossible. When independently specifying a resource theory's free operations and free states, we must prevent free states and free operations from being able, together, to generate arbitrary states.

Having introduced the resource-theory framework, we define generalized thermodynamic resource theories and illustrate them with grand-potential theories.

%
%
\subsection{Generalized thermodynamic resource theories}

Requiring that free operations conserve particular physical quantities leads to the extended family of thermodynamic resource theories. Which quantities are conserved depends on which physical systems are modeled, as explained in~\cite{YungerHalpern14}. The Hamiltonian $H$ is conserved in what we have termed \emph{Helmholtz theories}. Janzing \emph{et al.} first defined Helmholtz theories while investigating the resources required to cool systems, though those authors did not use the term ``resource theory''~\cite{Janzing00}. More recently, Brand\~ao \emph{et al.} studied conversions between resources in the asymptotic limit, as the number $n$ of copies of the converted resource diverges ($n \to \infty$)~\cite{BrandaoHORS13}. Horodecki and Oppenheim extended the analysis of conversions beyond the asymptotic limit~\cite{FundLimits2}. The literature about thermodynamic resource theories has  exploded recently:  Since the first draft of the present paper was released, coherences~\cite{LostaglioJR15,LostaglioKJR15} and correlations have been explored~\cite{LostaglioMP14}; connections have been drawn to fluctuation relations~\cite{YungerHalpernDGV14,SalekW15,AlhambraOP15}; and free operations have been generalized~\cite{FaistOR15}.

 In the \emph{grand-potential} theories focused on in this paper, free operations preserve total energy and total particle number. Free states in thermodynamic resource theories such as grand-potential theories model baths such as heat-and-particle reservoirs. As we shall see in Section~\ref{section:FreeStates}, the free states must be equilibrium states, such as grand canonical ensembles if $H$ and $N$ are conserved. If nonequilibrium states are free, the resource theory's quasiorder becomes trivial. 

We associate different baths, (equivalently, different physical quantities that can be exchanged, or, in anticipation of Sec.~\ref{section:ManyCopies}, different free energies) with different families of thermodynamic resource theories. The family of grand-potential theories models heat-and-particle exchanges. The resource theories in each family differ only by the values of the intensive variables that characterize the bath. To specify a grand-potential resource theory, one specifies an inverse temperature $\beta$ and a chemical potential $\mu$. (For simplicity, we focus on systems that contain particles of only one type. To specify a grand-potential theory that models exchanges of particles of $k$ types, one specifies $\beta, \mu_1, \mu_2, \ldots, \mu_k$.) 

Now, let us define the thermodynamic resource theories, their states, and their operations more precisely. Free operations preserve quantities represented, in conventional thermodynamics, by extensive variables. These variables are represented by operators that, with a density operator, define a state. To specify a state $R$ in a grand-potential theory, one specifies a density operator, a Hamiltonian, and a number operator:
\begin{equation}
  R = (\rho,  H,  N).
\end{equation}
These operators are defined on a quantum state space (Hilbert space) $\mathcal H_R$. How a physical system's bath and interactions translate into intensive and extensive variables that define a general family of thermodynamic resource theories detailed in~\cite{YungerHalpern14}.

For simplicity, we specialize to states whose operators commute with each other
($[\rho, H]  =  [\rho, N]  =  [H, N] = 0$) and have discrete, finite spectra.
(Since the initial release of this paper, 
noncommuting operators have been discussed in~\cite{YungerHalpern14,Imperial15,teambristol,TeamBanff}.)
However, we do not restrict the forms of $H$ and $N$ further. We denote the dimension of $\mathcal H_R$ by $d_R$. The density operator $\rho$ can be represented by a matrix diagonal relative to the eigenbasis shared by $H$ and $N$. Such a \emph{quasiclassical} density operator is fully specified by a vector $r$, which we shall call the \emph{state vector}, of its eigenvalues. Hence we also denote the state by $R=(r,H,N)$. The ordering of the elements $r_i$ in $r$ is discussed in Sec.~\ref{section:Quasiorder}.

The composition of the state $R=(\rho, H_R, N_R)$ on $\mathcal H_R$ with the state
$S=(\sigma, H_S, N_S)$ on $\mathcal H_S$ is defined as
\begin{align}
R\comp S=(\rho\otimes \sigma,H_R+H_S,N_R+N_S),
\end{align}
wherein $H_R+H_S=H_R\otimes \id_S+\id_R\otimes H_S$ and $N_R+N_S$ is defined similarly.

As detailed in Section~\ref{section:FreeStates}, free states have density operators whose the probabilities equal Boltzmann factors. The free states in the grand-potential theory take the form 
\begin{align}
\label{eq:boltzweights}
\gamma=e^{-\beta (H-\mu N)}/Z,
\end{align} 
wherein $\beta$ and $\mu$ are real numbers and $Z$ is the normalization factor, or partition function. We denote free states by $G=(\gamma,H,N)$ or $G=(g,H,N)$. Each resource $R=(\rho,H,N)$ is associated with an equilibrium state $G_R=(\gamma_R,H,N)$, or $G=(g_R,H,N)$.

%
%
\subsubsection{Free operations}

We call the free operations in thermodynamic resource theories \emph{equilibrating operations}. They are defined in our grand-potential example as follows.

\begin{definition}[Equilibrating operation]
\label{def:equiop}
In the grand-potential theory defined by $(\beta, \mu)$, an \emph{equilibrating operation} on a state
$R = (\rho, H_R, N_R)$ is any realization of the following three steps:
\begin{enumerate}[(a)]
\item  the drawing of a free state $G  =  (\gamma, H_G, N_G)$ from the bath;
\item  the performing of a unitary transformation $U$ on $R\comp G$, wherein $[U,H_R+H_G]=[U,N_R+N_G]=0$; and
\item  the discarding (tracing out) of any subsystem $A$ associated with its own Hamiltonian and number operator. 
\end{enumerate}
The operation is a completely positive trace-preserving linear map of the form
\begin{equation}
   R  \mapsto  R' =
   ( \tr_A (U [\rho \otimes \gamma] U^\dag ),   H_R + H_G  -  H_A,   N_R + N_G  -  N_A ).
\end{equation}
\end{definition}

Free operations can mix levels whose energies equal each other and whose particle numbers equal each other. We call free operations \emph{equilibrating operations} because (as shown in Sec.~\ref{section:Quasiorder}) free operations monotonically evolve states toward equilibrium states. Equilibrating operations induce on states a quasiorder that we denote by $R\gorder R'$. 

In thermodynamic resource theories other than the grand-potential theories we focus on, free unitaries preserve operators associated with other extensive variables. For example, if a system has $N_1$ particles of Species $1$ and $N_2$ particles of Species $2$, $[U,  N_{1_{\rm tot}}]  =  [U,  N_{2_{\rm tot}}]  =  0$. 

Equilibrating operations idealize the operations easily performable by thermodynamic experimentalists.
Experimentalists cannot perform all unitaries
that preserve energy, particular number, etc.
Thermal operations were ``coarse-grained'' to more realistic operations in~\cite{Perry15}.
We expect similar coarse-graining 
to bridge equilibrating operations from idealization to reality.

%
%
%
\section{Unique form of free states}   \label{section:FreeStates}

The form of equilibrating operations in Definition~\ref{def:equiop} implies that only grand-canonical ensembles can be free in grand-potential theories, else the quasiorder breaks down. The breakdown manifests in two ways.

\begin{theorem}
\label{thm:freestates}
Consider any grand-potential resource theory in which each pair $(H,N)$ corresponds to exactly one free state $G = (\gamma, H, N)$. If $\gamma$ does not have the Boltzmann form of Eq.~\eqref{eq:boltzweights}, then 
\begin{enumerate}[(a)]
\item some resources $R$ can be generated solely with equilibrating operations: $G \gorder R$, and 
\item equilibrating operations can transform one copy of any state $R$ into one copy of any state $S$: 
$R \gorder S$.
\end{enumerate}
\end{theorem}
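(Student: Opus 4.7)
My plan rests on a single technical observation: any unitary that permutes eigenstates sharing identical energy and identical particle number automatically commutes with $H$ and $N$, so it qualifies as a free unitary in an equilibrating operation. I will use this to exhibit, whenever $\gamma$ is not of Boltzmann form, a free permutation that genuinely changes $\gamma$ or a tensor power $\gamma^{\otimes n}$, forcing the creation of a state distinct from the unique free state and thereby breaking the quasiorder.

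For part (a), I would split into two sub-cases. First, suppose $\gamma$ possesses eigenstates $|i\rangle,|j\rangle$ with $(E_i,N_i)=(E_j,N_j)$ but $p_i\neq p_j$; then the transposition $U_{ij}$ is a free unitary and maps $\gamma$ to some $\gamma'\neq\gamma$. By the uniqueness hypothesis, $R=(\gamma',H,N)$ is a resource, so $G\gorder R$. Second, suppose $\gamma$ respects degeneracies (equal probabilities on equal-$(E,N)$ levels) but is still not Boltzmann; equivalently, the assignment $(E,N)\mapsto-\log p$ fails to be affine on the support. A Cauchy-type argument then produces indices $i,j,k,l$ with $E_i+E_k=E_j+E_l$, $N_i+N_k=N_j+N_l$, yet $p_ip_k\neq p_jp_l$. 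The swap of the joint eigenstates $|ik\rangle$ and $|jl\rangle$ of $\gamma\comp\gamma$ is a free unitary that alters $\gamma\otimes\gamma$, and uniqueness again produces a resource.

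For part (b), I plan to upgrade the permutations from (a) into a ``work pump'' that exploits the non-Boltzmann character of $\gamma$ to excite an auxiliary weight system. A single anomalous swap, applied jointly to a weight and a copy of $\gamma$, biases the weight toward higher-energy states while leaving the $\gamma$-copy approximately unchanged; iterating over many independent copies of $\gamma$ concentrates arbitrarily high $(E,N)$ on the weight with probability approaching one. A sufficiently ``charged'' weight then serves as an ancilla able to supply or absorb any amount of $E$ or $N$ required by a target unitary on an output subsystem, so that given any $R$ and any $S$ one may compose $R$ with bath copies and the charged weight, apply a suitable energy- and number-conserving unitary that installs $\sigma$ on the $(H_S,N_S)$ subsystem, and discard the remainder. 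The main obstacle is making the pump rigorous: one must verify that each pumping round extracts strictly positive $(E,N)$ resources without exhausting the non-Boltzmann feature of $\gamma$, and that the accumulated excitation is genuinely convertible, under simultaneous energy- and number-conservation, into the unitary control needed for an arbitrary $R\gorder S$. I expect the cleanest route mirrors the bath-regeneration arguments from the Helmholtz analyses of Brand\~ao \emph{et al.}\ and Horodecki--Oppenheim, adapted to the joint $(E,N)$ bookkeeping of grand-potential theories.
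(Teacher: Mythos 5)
Your part (a), case 1 (unequal weights within a single degenerate $(E,N)$ subspace) matches the paper's ``uniform-eigensubspace'' step. But case 2 has a genuine gap: you try to detect the non-Boltzmann character using only self-compositions $\gamma\otimes\gamma$ (or $\gamma^{\otimes n}$), and the claimed Cauchy-type argument producing indices with $E_i+E_k=E_j+E_l$, $N_i+N_k=N_j+N_l$ but $p_ip_k\neq p_jp_l$ simply need not succeed. If the spectrum of $(H,N)$ has no nontrivial additive coincidences --- e.g.\ a qubit, or levels $\{0,1,\sqrt2\}$ with incommensurate gaps --- every degeneracy of $\gamma^{\otimes n}$ comes from permuting identical multisets of single-copy levels, so the product weights are automatically equal and no free permutation changes $\gamma^{\otimes n}$, even though $\gamma$ can be badly non-Boltzmann (e.g.\ $p_{\sqrt2}/p_0\neq(p_1/p_0)^{\sqrt2}$, or a two-level free state at the ``wrong'' temperature). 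The theorem's hypothesis is about a theory containing one free state for \emph{every} pair $(H,N)$, and the paper's proof uses exactly this: it composes the given free state with \emph{other} free states of the family that share an energy gap (Lemma~\ref{lemma:FreeStateFrac}), introduces an equally-spaced ``thermometer'' free state to force a single universal $\beta$ across all free states (Lemmas~\ref{lemma:CanonLemma2}--\ref{lemma:CanonLemma3}), and then bootstraps from the Helmholtz result to fix $\alpha=-\beta\mu$ in the grand-potential setting; claim (a) is the contrapositive. Your argument must likewise bring in cross-Hamiltonian compositions, otherwise it proves nothing for free states that are internally consistent but inconsistent with the rest of the family.

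For part (b), your ``work pump'' is a different route from the paper's and is not carried through: you yourself flag the central step (that each round deterministically charges the weight, and that a charged weight suffices to install an arbitrary $\sigma$ under joint $(E,N)$ conservation) as an obstacle. Note also that what you need is a weight ending in an (essentially pure) energy eigenstate --- high average $(E,N)$ with ``probability approaching one'' does not by itself license the exact transformation $R\gorder S$ claimed in the theorem. The paper's (admittedly sketched) argument, following Janzing \emph{et al.}, avoids the battery altogether: many copies of the non-Boltzmann free state contain pairs of levels separated by any desired gap $E_j$ with essentially arbitrary relative weight $p_j/p_1$; swapping the two degenerate levels of the resource--bath composite is free and shifts the resource's ratio by $\delta=r_1p_j-r_jp_1$, and iterating these small nudges steers $r_j/r_1$ to any target $s_j/s_1$, level pair by level pair. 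If you want to keep the pump idea you would have to supply the concentration and convertibility arguments you deferred; as written, part (b) is an outline rather than a proof.
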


The proofs appear in Appendix~\ref{section:FreeStateApp}, but we sketch the main ideas here.
First, we derive the form of the free states in Helmholtz theories.\footnote{\label{FreeStateNote}
In~\cite{BrandaoHNOW13}, the canonical form $(e^{ -\beta H } / Z,  H)$ of the free states in Helmholtz theories is argued to follow from~\cite{PuszW78}. According to~\cite{PuszW78}, only canonical ensembles are completely passive: No work can be extracted from canonical ensembles, even from infinitely many, in the absence of other resources. The work extraction in~\cite{PuszW78}, however, is not formulated as in Helmholtz resource theories. How to translate the result from~\cite{PuszW78} into resource theories may not be obvious to all readers. To clarify this subtlety and others, we derive free states' forms directly from the resource-theory framework. Around the time our paper was released, an alternative approach appeared in~\cite{BrandaoHNOW13}.}
Then, we bootstrap from Helmholtz theories to Theorem~\ref{thm:freestates}, which concerns grand-potential theories.

To prove Claim (a), we apply the derivation of the forms of the free states in the resource theory of nonuniformity~\cite{HHOLong}, which models closed isolated systems~\cite{YungerHalpern14}. Consider some energy-and-particle-number eigensubspace $S_{E_i, N_j}$. The free state $\gamma$ has some weight on $S_{E_i, N_j}$. If that weight is distributed nonuniformly across the levels in $S_{E_i, N_j}$, free operations can redistribute the weight arbitrarily across $S_{E_i, N_j}$, generating states not defined as free. Claim (b) follows from modifying an argument by Janzing \emph{et al}.~\cite{Janzing00}. The argument concerns the ``effective temperatures'' of the states that can be created from given resources in the absence of any bath. 

These resource-theory derivations of equilibrium ensembles offer operational alternatives to assumptions such as the Ergodic Hypothesis. According to the Ergodic Hypothesis, uniform distributions represent equilibrated isolated systems' states. Such assumptions have drawn criticism~\cite{JaynesI57,Lenard78}, lending operational replacements appeal. 

%
%
%
%
\section{Quasiorder on states}   \label{section:Quasiorder}

The quasiorder induced by equilibrating operations on quasiclassical states is equivalent to a generalization of majorization. Veinott defined this generalization first, calling it \emph{$d$-majorization}~\cite{Veinott71}. Ruch and collaborators (who called $d$-majorization \emph{the mixing distance})~\cite{RuchSS78,Ruch75} applied $d$-majorization to physics, as did Uhlmann and colleagues~\cite{uhlmann_ordnungsstrukturen_1978,alberti_dissipative_1981}. We dub this quasiorder in general thermodynamic resource theories \emph{equimajorization}, because it is $d$-majorization relative to equilibrium states.

\begin{definition}   \label{definition:EquiMajor}
Let $R$ and $S$ denote states in any grand-potential theory defined by $(\beta, \mu)$. Let $g_R$ and $g_S$ denote the corresponding equilibrium states' state vectors, which contain $d_R$ and $d_S$ elements respectively. ${R}$ \emph{equimajorizes} ${S}$, written as
${R}   \eqmaj   {S}$, if there exists a $d_S\times d_R$ stochastic matrix $M$ such that 
     \begin{equation}   \label{eq:MCondns}
          M r   =   s,
          \quad
          M g_R  =  g_S, 
          \quad {\rm and} \quad
          \sum_{i=1}^{ d_S }  M_{ij}  =  1   \; \; \forall   j = 1, 2, \ldots, d_R.
      \end{equation}
 \end{definition}

In the resource theory of nonuniformity, which models closed isolated systems, equilibrium states are microcanonical ensembles: $g_R  =  \tfrac1{d_R}(1,1,\ldots, 1)$~\cite{HHOLong,GourMNSYH13}. Relative to these uniform states, equimajorization reduces to majorization~\cite{MarshallOA10}.

Janzing \emph{et al.}\ established that the quasiorder on quasiclassical resources is equivalent to equimajorization in Helmholtz theories~\cite[Theorem 5]{Janzing00}. 
An alternative proof appears in~\cite{FundLimits2}. In Appendix~\ref{section:QuasiorderApp}, we extend the proof technique in~\cite{Janzing00} to grand-canonical theories, obtaining the following result.

\begin{theorem}   \label{theorem:FreeOpMajor}
Let $R$ and $S$ denote states in the grand-potential theory defined by $(\beta, \mu)$.
There exists an equilibrating operation that maps $R$ to $S$ if and only if $R$ equimajorizes $S$:
\begin{align}
R\gorder S \Longleftrightarrow R \eqmaj S.
\end{align}
\end{theorem}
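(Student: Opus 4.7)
The plan is to prove the two implications separately: the forward direction $R \gorder S \Rightarrow R \eqmaj S$ is a short calculation, while the converse is the substantive part and will mirror Janzing \emph{et al.}'s embedding argument, enlarged so that both energy and particle-number conservation are handled simultaneously.

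For the forward direction, I would fix a joint eigenbasis of $H_R$, $N_R$, $H_G$, $N_G$, and $H_A$, $N_A$ and, using the commutation assumption, track the action of an equilibrating operation on diagonals alone. The operation defines a column-stochastic matrix
\begin{equation}
M_{ij}  =  \sum_{\alpha,k} \bigl|\langle i,k \lvert U \rvert j,\alpha\rangle\bigr|^2 \, \gamma_\alpha,
\end{equation}
where $\alpha$ labels bath levels and $k$ labels the traced-out ancilla levels; column-stochasticity is immediate from $U U^\dagger = \id$ and $\sum_\alpha \gamma_\alpha = 1$. By construction $M r = s$. For the crucial third condition $M g_R = g_S$, I would observe that $g_R \otimes \gamma$ has Boltzmann form with respect to the total operators $H_R+H_G$ and $N_R+N_G$ (because Boltzmann weights factorize under $+$), so it is a function of the conserved operators and hence invariant under $U$; partial trace over $A$ then yields a state of Boltzmann form, whose diagonal is $g_S$ by uniqueness of the equilibrium state for $(H_R + H_G - H_A,\, N_R + N_G - N_A)$ established in Theorem~\ref{thm:freestates}.

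For the converse, I would imitate the ``microstate refinement'' construction used in Helmholtz theories, promoted to joint $(E,N)$-strata. Given a stochastic $M$ satisfying \eqref{eq:MCondns}, I would (i) approximate the grand-canonical weights of a large bath by rationals with a common denominator $D$; (ii) refine each bath level of energy $E$ and particle number $N$ into a multiplicity proportional to its Boltzmann weight, so that the bath distribution becomes uniform on these microstates within each $(E,N)$-subspace; (iii) lift $M$ on $R \comp G$ to a stochastic map $\tilde M$ on the refined space that respects this uniform-within-strata structure, with the uniformity matching guaranteed exactly by the conjunction of $M g_R = g_S$ with the fact that $g_R$, $g_S$, and $\gamma$ have Boltzmann form; (iv) apply Birkhoff--von Neumann to write $\tilde M$ as a convex combination of permutations of microstates that each map any $(E,N)$-stratum to itself; (v) realize each such permutation by a unitary commuting with $H_R + H_G$ and $N_R + N_G$; and (vi) implement the convex combination by introducing a classical control ancilla that is traced out at the end, and take the limit $D \to \infty$ to remove the rational approximation.

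The main obstacle is step (iii)--(iv): one must verify that the constraints $M r = s$, $M g_R = g_S$, and column-stochasticity together force $\tilde M$ into the class of doubly stochastic matrices that permute within matching $(E,N)$-blocks, so that Birkhoff--von Neumann yields permutations respecting the two conservation laws at once. This is where the passage from Helmholtz to grand-potential theories is nontrivial; the extra label $N$ gives finer strata, and one has to choose the bath large enough that for every pair of blocks with equal total $(E,N)$ there are enough microstates to absorb the required transitions. The remaining steps (limits, ancilla handling, and the reduction to the statement of the theorem) are then routine, and the desired equilibrating operation on $R$ alone is obtained by regarding the bath plus the control register as the free state $G$ drawn in Definition~\ref{def:equiop} and the permuted microstates outside $R$ as the discarded subsystem $A$.
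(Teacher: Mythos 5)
Your forward direction is essentially the paper's argument, made slightly more explicit: the induced matrix on diagonals is stochastic, maps $r$ to $s$ by construction, and maps $g_R$ to $g_S$ because the joint Gibbs state is invariant under the conserving unitary and factorizes under the partial trace over $A$. That part is fine. For the converse you take a genuinely different route from the paper: the paper adapts Janzing \emph{et al.}'s typical-sequences construction (a bath of $n$ copies of $g_R$ and $g_S$, a type-preserving permutation $\pi_n$ that automatically conserves energy \emph{and} particle number, and an $n\to\infty$ limit), whereas you propose the embedding/Birkhoff--von Neumann construction in the spirit of the alternative Helmholtz proof the paper cites. That route can be made to work for commuting $(H,N)$, and it has the advantage of making the role of the condition $Mg_R=g_S$ very transparent.

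However, as written there is a genuine gap exactly where you flag ``the main obstacle,'' and the way you phrase it points in the wrong direction. The constraints $Mr=s$, $Mg_R=g_S$ and stochasticity do \emph{not} force any lift $\tilde M$ to be block-doubly-stochastic; rather, you must \emph{engineer} the block structure by choosing the bath spectrum so that, within each total-$(E,N)$ shell of system-plus-bath, the number of joint microstates compatible with system level $j$ is (a rational approximation to something) proportional to the Gibbs weight $e^{-\beta(E_j-\mu n_j)}$, i.e.\ the bath degeneracy at $(E_B,N_B)$ must scale as $e^{\beta(E_B-\mu N_B)}$ over the relevant window. Only after that choice does the uniform-spreading lift, $\tilde M_{(i,a),(j,b)}=M_{ij}/D_i$ with $D_i$ the multiplicity of output level $i$ in the shell, become doubly stochastic \emph{within each shell} --- column sums equal $1$ by stochasticity of $M$, row sums equal $1$ precisely because $Mg_R=g_S$ --- and Birkhoff can be applied shell by shell. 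Two further points are glossed over: (i) $R$ and $S$ live on different spaces with different $(H,N)$, so the square matrices needed for Birkhoff only appear after you adjoin a fresh copy of $G_S$ from the bath and discard the original $R$-system together with the rest, in analogy with the paper's bath $g_R^{\otimes n}\otimes g_S^{\otimes n}\otimes g_S$; and (ii) the convex combination of permutations must be driven by an ancilla that is itself a \emph{free} state, i.e.\ a Gibbs state of a fully degenerate $(H,N)$, which gives only uniform randomness, so the Birkhoff weights must also be rationalized and absorbed into the $D\to\infty$ limit. None of these is fatal --- they are the standard ingredients of the embedding proof --- but they are the substance of the converse, and the proposal currently asserts rather than supplies them.
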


As mentioned above, sets of resource monotones completely characterize equimajorization and so characterize the existence of equilibrating operations. One such set consists of the \emph{$f$-divergences}~\cite{csiszar_informationstheoretische_1963,morimoto_markov_1963,ali_general_1966}.
Every convex function $f$ corresponds to an $f$-divergence
\begin{align}
\phi_f(R)=\sum_{i=1}^{d_R}g_{i}\,f\left(\frac{r_i}{g_{i}}\right),
\end{align}
wherein $g_i$ denotes the $i^{\rm th}$ element of the state vector $g_R$ of the equilibrium state associated with $R$. Subsets of the $f$-divergences suffice to characterize equimajorization, as shown below.  Various choices of $f$ lead to well-known functions~\cite{liese_divergences_2006}. For example, $f(x)=x\log x$ and $f(x)=-\log x$ lead to the relative entropies $D(r ||  g_R)=\sum_{i=1}^{d_R}r_j \log (r_i/g_i)$ and $D(g_R||r)$.
The function $f(x)=(x^\alpha-1)/(\alpha-1)$ leads to the R\'{e}nyi divergences 
$D_\alpha(r  ||  g_R)
=   \frac1{1-\alpha}  \log   \left(  \sum_{i=1}^{d_R}r_i^\alpha g_i^{(1-\alpha)}  \right)$ for $\alpha\geq 0$.

The \emph{Lorenz curve}, introduced by Lorenz in economics~\cite{Lorenz05}, encodes another complete set of monotones. Lorenz curves were applied recently to  Helmholtz resource theories~\cite{FundLimits2}. In a grand-potential theory, the rescaled Lorenz curve $L_R:[0,1]\rightarrow [0,1]$ represents the state $R$. The curve is the piecewise linear function that connects the points 
\begin{align}
\label{eq:lorenzpoints}
(t_k,L_R(t_k))=\left\{\begin{array}{ll}
(0,0) & k=0\\
\left(\sum_{j=1}^k g_{\pi(j)}, \sum_{j=1}^k r_{\pi(j)}\right) & k\in\{1,\dots,d_R\}\\
\end{array}\right.,
\end{align} 
wherein $\pi$ denotes a permutation such that the sequence $(r_{\pi(j)}/g_{\pi(j)})_j$ is non-increasing. 
In accordance with~\cite{Lorenz05,FundLimits2}, we define the rescaled Lorenz curve as a monotonically increasing concave function. (Different conventions appear elsewhere~\cite{MarshallOA10}.)

Having defined the $f$-divergences and the rescaled Lorenz curve, we will state their relationship with equimajorization. Ruch, Schranner, and Seligman first proved this relationship for continuous systems~\cite{RuchSS78}, using tools from measure theory. Uhlmann proved the relationship more directly, for discrete systems, which we address~\cite{uhlmann_ordnungsstrukturen_1978}. By following Uhlmann, we will prove this proposition in Appendix~\ref{section:QuasiorderApp}:

\begin{proposition} 
\label{prop:equimajorization}
For any states $R$ and $S$ in the grand-potential resource theory defined by $(\beta, \mu)$, the following are equivalent:
\begin{enumerate}[(a)]
\item $R\eqmaj  S$.
\item $L_R(t)\geq L_S(t)$ for all $t\in [0,1]$.
\item $\phi_{f_a}(R)\geq \phi_{f_a}(S)$ for every function 
$f_a(t)   =   \max  \{0, t-a\}$ associated with any $a \in \mathbbm{R}$.
\item $\phi_f(R)\geq \phi_f(S)$ for all continuous convex functions $f$.
\end{enumerate}
\end{proposition}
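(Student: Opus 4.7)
The plan is to prove the four statements equivalent by establishing the cyclic chain (a)$\Rightarrow$(d)$\Rightarrow$(c)$\Rightarrow$(b)$\Rightarrow$(a). Three of the arrows are short; closing the cycle via (b)$\Rightarrow$(a) is the substantive step.

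For (a)$\Rightarrow$(d), I would fix a stochastic matrix $M$ satisfying \eqref{eq:MCondns} and notice that, for each row index $i$, the numbers $q_{ij}:=M_{ij}g_{R,j}/g_{S,i}$ form a probability distribution over $j$, because $\sum_j q_{ij}=(Mg_R)_i/g_{S,i}=1$. The ratio in $\phi_f(S)$ then decomposes as a convex combination, $s_i/g_{S,i}=\sum_j q_{ij}(r_j/g_{R,j})$. Applying Jensen's inequality to a continuous convex $f$ yields $g_{S,i}f(s_i/g_{S,i})\leq \sum_j M_{ij}g_{R,j}f(r_j/g_{R,j})$, and summing over $i$ and using the column-sum condition $\sum_i M_{ij}=1$ gives $\phi_f(S)\leq \phi_f(R)$. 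The step (d)$\Rightarrow$(c) is immediate, since each $f_a(t)=\max\{0,t-a\}$ is continuous and convex. For (c)$\Rightarrow$(b), the plan is to identify $\phi_{f_a}(R)$ as a Legendre-type transform of the Lorenz curve. Reordering indices by the permutation $\pi$ that makes $r_{\pi(j)}/g_{\pi(j)}$ non-increasing, and setting $t_k=\sum_{j=1}^k g_{\pi(j)}$, one computes $\phi_{f_a}(R)=\sum_i\max\{0,r_i-ag_i\}=\max_{k}[L_R(t_k)-at_k]=\sup_{t\in[0,1]}[L_R(t)-at]=:\psi_R(a)$, where the second-to-last equality uses that $L_R$ is piecewise linear with breakpoints at the $t_k$. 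Since $L_R$ is concave, closed, and satisfies $L_R(0)=0$, $L_R(1)=1$, the biconjugate identity $L_R(t)=\inf_a[\psi_R(a)+at]$ holds. Given hypothesis (c), $\psi_R(a)\geq \psi_S(a)$ for every $a$, so $\psi_R(a)+at\geq \psi_S(a)+at$ and taking the infimum over $a$ yields $L_R(t)\geq L_S(t)$.

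The remaining step (b)$\Rightarrow$(a) is the principal obstacle: one must pass from a geometric comparison of Lorenz curves to the existence of a stochastic matrix $M$ satisfying all three conditions of Definition~\ref{definition:EquiMajor} simultaneously. Following Uhlmann's treatment of the discrete case, my plan is to construct $M$ iteratively out of elementary stochastic ``transfer'' operations that preserve $g_R$ while moving $r$ toward $s$, using the dominance $L_R\geq L_S$ to certify that each transfer remains feasible. Concretely, after passing to a common refinement of the partitions defining the two Lorenz curves (to accommodate the possibly different dimensions $d_R\neq d_S$), the dominance $L_R\geq L_S$ translates into a sequence of partial-sum inequalities on the sorted vectors; one then handles the ``extremal'' entry where the ratio $s_{\pi(1)}/g_{S,\pi(1)}$ sits inside the ladder of slopes of $L_R$, uses it to build a single column block of $M$, and iterates on the residual data, whose Lorenz curves still satisfy the dominance by construction. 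The main subtleties will be (i) verifying that the residual vectors remain non-negative and correctly normalized so that the induction hypothesis applies and (ii) ensuring that the assembled $M$ has column sums equal to $1$ rather than merely being row-stochastic; both will follow from carefully tracking how the reference measure $g_R$ is partitioned at each step. An alternative, more abstract route, which I would keep in reserve, is to observe that the set $\{Mr : M\text{ stochastic}, Mg_R=g_S, \sum_i M_{ij}=1\}$ is a convex polytope and to invoke Farkas-type duality to show that its defining inequalities are precisely those encoded by $L_R\geq L_S$.
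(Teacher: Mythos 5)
Your three arrows $(a)\Rightarrow(d)\Rightarrow(c)\Rightarrow(b)$ are correct, and they are genuinely different from (and in places more elementary than) the paper's chain: the paper goes $(a)\Rightarrow(b)$ via monotonicity of the hypothesis-testing quantity $b_\eps$ together with Lemma~\ref{lem:Lorenzhypotest}, $(b)\Rightarrow(c)$ via the SDP dual of $b_\eps$, and $(c)\Rightarrow(d)$ via Uhlmann's approximation of continuous convex functions by positive combinations of the $f_a$; you instead get $(a)\Rightarrow(d)$ directly by Jensen's inequality with the kernel $q_{ij}=M_{ij}g_{R,j}/g_{S,i}$ (legitimate here because grand-canonical $g$'s are strictly positive), and $(c)\Rightarrow(b)$ by recognizing $\phi_{f_a}(R)=\sup_{t\in[0,1]}[L_R(t)-at]$ and biconjugating the continuous concave $L_R$. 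Those steps buy you a proof of $(a)\Rightarrow(d)\Rightarrow(c)\Leftrightarrow(b)$ that bypasses hypothesis testing and Uhlmann's approximation lemma entirely.

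The genuine gap is the closing arrow $(b)\Rightarrow(a)$, the only step that produces the stochastic matrix, and the only step your choice of cycle forces you to prove in the hard direction. What you offer is a plan, not a proof: the inductive ``transfer'' construction is left hanging exactly on the two points you yourself flag (non-negativity and normalization of the residual data, and column sums of the assembled $M$), and the mismatch $d_R\neq d_S$ plus possible ties in the slope ladder make these details genuinely delicate rather than routine; the reserve Farkas idea, as stated (``show the polytope's defining inequalities are precisely those encoded by $L_R\geq L_S$''), is a restatement of the claim to be proved rather than a method. The paper never proves this direction constructively: it closes the cycle at $(d)\Rightarrow(a)$ by contraposition with a separating-hyperplane argument (following Alberti--Uhlmann) --- if no admissible $M$ exists, separate $s\oplus g_S$ from the compact convex set $\{Mr\oplus Mg_R\}$, and the separating functional yields the convex function $f(x,y)=\max_j\{x_jx+y_jy\}$ with $\phi_f(S)>\phi_f(R)$, contradicting $(d)$. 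You could repair your argument with little extra work by adopting this: your identity $\phi_{f_a}(R)=\sup_t[L_R(t)-at]$ already gives $(b)\Rightarrow(c)$ for free, and a separation argument gives $(d)\Rightarrow(a)$ (or $(c)\Rightarrow(a)$, after noting that a max-affine convex function is a positive combination of $f_a$'s plus an affine part that contributes equally to $\phi_f(R)$ and $\phi_f(S)$ by normalization). As written, however, no arrow returns from $(b)$ to $(a)$, so the equivalence is not yet established.
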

\noindent An illustration appears in Fig.~\ref{fig:lorenz}.

\begin{figure}[ht]
\centering
\includegraphics[width=.35\textwidth]{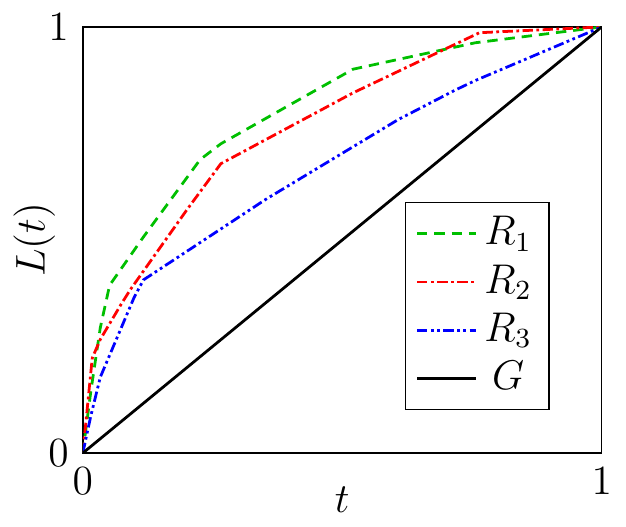}
\caption{\label{fig:lorenz} (Color online)
Rescaled Lorenz curves for three resources ($R_1$, $R_2$, $R_3$) and an equilibrium state ($G$). The Lorenz curve encodes the quasiorder on states, as equilibrating operations can transform $R$ into $R'$ if and only if $L_R(t)\geq L_{R'}(t) \; \: t  \in [0, 1]$. Here, $R_1\gorder  R_3$ and $R_2\gorder  R_3$, but $R_1$ and $R_2$ are incomparable. The equilibrium state, having the linear Lorenz curve $L_G(t)=t$, is at the bottom of the quasiorder.}
\end{figure}

Having characterized $L_R$ in terms of eigenvalues, we explain its relationship with hypothesis testing. The rescaled Lorenz curve is equivalent to the minimal Type II error probability, cast as a function of the Type I error probability, in an asymmetric hypothesis test. Harremo\"es noted the relationship between Lorenz curves and hypothesis tests~\cite{harremoes_new_2004}; we establish the relationship more concretely. 

An \emph{asymmetric hypothesis test} is used to distinguish whether a given state is $\rho$ or $\sigma$. As indicated by our notation, hypothesis testing can be defined in quantum contexts. A test can be thought of as a two-outcome positive operator-valued measurement (POVM) $\{ Q, \id - Q \}$. If the measurement yields the outcome $Q$, the state is likely $\rho$. If the measurement yields $\id - Q$, the state is likely $\sigma$. A Type I error occurs if the state is $\rho$ but $\id - Q$ obtains, so the state seems likely to be $\sigma$. A Type II error occurs if the state is $\sigma$ but seems likely to be $\rho$. The \emph{optimal} test minimizes the Type II error probability while preventing the Type I error probability from exceeding some tolerance $\eps \in [0, 1]$. 

The optimal Type II error probability is
\begin{align}
\label{eq:opttypeIIerror}
b_\epsilon(\rho||\sigma)  :=  \mathop{\min_{\tr[Q\rho] \geq 1 - \eps}}_{0\leq Q\leq \id} \tr[Q\sigma]. 
\end{align}
The condition $\tr[Q\rho] \geq 1 - \eps$ is called the \emph{constraint}, and $\tr[Q\sigma]$ is the \emph{objective function}. Equation~\eqref{eq:opttypeIIerror} defines a semidefinite program, a type of convex optimization, which has a dual form:
\begin{align}  \label{eq:htdual}
   b_\eps(\rho  ||  \sigma)=
   \mathop{\max_{\mu\rho-\sigma\leq \tau}}_{\mu,\tau\geq 0}\,\,
   \left\{   (1 - \eps) \mu-\tr[\tau]   \right\}.
\end{align}
The primal and dual forms' equivalence follows from properties of semidefinite programs~\cite{jensen_generalized_2013}. In quasiclassical notation, $Q$ is represented by a matrix, and traces are replaced by sums.

Hypothesis testing can be related to rescaled Lorenz curves as follows. Consider distinguishing 
between the state vector $r$ in the quasiclassical state $R = (r, H, N)$ and the $g_R$ in the equilibrium state $G_R  =  (g_R, H, N)$.
\begin{lemma}
\label{lem:Lorenzhypotest}
The inverse of $\epsilon \mapsto b_\epsilon(r||g_R)$ is the piecewise linear function that connects the points $(L_R (t_k),1-t_k)$, wherein $t_k$ and $L_R (t_k)$ define the rescaled Lorenz curve for $R$. That is,
\begin{align}
   (t_k,   1  -  L_R (t_k) )
   =   ( b_\epsilon(r||g_R),   \epsilon ).
\end{align}
\end{lemma}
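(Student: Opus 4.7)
The plan is to invoke a quasiclassical Neyman--Pearson argument: for each $k$ I will exhibit a threshold test that is simultaneously feasible and optimal in the semidefinite program defining $b_\eps(r||g_R)$ at the value $\eps = 1 - L_R(t_k)$, and then extend to intermediate values of $\eps$ by continuously varying the boundary weight of the test. Since $\rho$ and $\gamma_R$ commute and are diagonal in a common basis, the optimizer $Q$ in Eq.~\eqref{eq:opttypeIIerror} may be taken diagonal in that basis, reducing $b_\eps(r||g_R)$ to the linear program $\min \sum_i Q_i g_i$ over $Q\in[0,1]^{d_R}$ subject to $\sum_i Q_i r_i \geq 1-\eps$. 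Define the threshold test $Q^\ast$ by $Q^\ast_{\pi(j)} = 1$ for $j \leq k$ and $Q^\ast_{\pi(j)} = 0$ for $j > k$, with $\pi$ the sorting permutation from Eq.~\eqref{eq:lorenzpoints}. By construction, $\sum_i Q^\ast_i r_i = L_R(t_k)$ and $\sum_i Q^\ast_i g_i = t_k$, so $Q^\ast$ is feasible at $\eps = 1 - L_R(t_k)$ and attains objective value $t_k$.

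To certify optimality, set $\lambda := r_{\pi(k)}/g_{\pi(k)}$ and observe that, for any feasible $Q$, the summand $(r_i - \lambda g_i)(Q^\ast_i - Q_i)$ is nonnegative for every $i$: for $i = \pi(j)$ with $j \leq k$, monotonicity of $(r_{\pi(j)}/g_{\pi(j)})_j$ gives $r_i \geq \lambda g_i$ while $Q^\ast_i = 1 \geq Q_i$; for $j > k$ both inequalities reverse, and the product remains nonnegative. Summing over $i$ and using the feasibility bound $\sum_i Q_i r_i \geq L_R(t_k) = \sum_i Q^\ast_i r_i$ yields
\begin{equation*}
  \lambda \, \bigl( t_k - \textstyle\sum_i Q_i g_i \bigr)
  \;\leq\; L_R(t_k) - \sum_i Q_i r_i \;\leq\; 0,
\end{equation*}
so $\sum_i Q_i g_i \geq t_k$. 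Hence $b_{1 - L_R(t_k)}(r||g_R) = t_k$, establishing the identification $(t_k, 1 - L_R(t_k)) = (b_\eps, \eps)$ at every breakpoint.

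To interpolate between consecutive breakpoints, note that $\eps \mapsto b_\eps$ is convex---either from the dual representation in Eq.~\eqref{eq:htdual} as a pointwise supremum of affine functions, or by standard LP value-function theory. Continuously shrinking the boundary weight $Q^\ast_{\pi(k)}$ from $1$ toward $0$ traces a one-parameter family of feasible tests whose $(\eps, b_\eps)$ coordinates move linearly from $(1 - L_R(t_{k-1}), t_{k-1})$ to $(1 - L_R(t_k), t_k)$, giving a piecewise linear upper envelope that meets the lower envelope forced by convexity. The inverse of $\eps \mapsto b_\eps$ is therefore the piecewise linear function through the points $(t_k, 1 - L_R(t_k))$, as claimed. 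The main delicacy I anticipate is the handling of degeneracies: when several indices share a common likelihood ratio $r_i / g_i$, the permutation $\pi$ and hence the threshold test are nonunique, but since both objective and constraint depend only on the total weights of equal-ratio blocks, neither $b_\eps$ nor $L_R(t_k)$ is affected.
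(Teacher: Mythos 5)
Your breakpoint analysis is sound: the reduction to diagonal tests, the threshold test $Q^\ast$, and the Neyman--Pearson certificate with $\lambda=r_{\pi(k)}/g_{\pi(k)}$ correctly establish $b_{1-L_R(t_k)}(r||g_R)=t_k$. This is a legitimate primal-side alternative to the paper's argument, which certifies the same equality by exhibiting a feasible pair $(\mu_m,\tau_m)$ in the dual program~\eqref{eq:htdual}; both routes, incidentally, implicitly assume $r_{\pi(k)}>0$, so you are no worse off than the paper on that edge case.

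The gap is in the interpolation step. Convexity of $\eps\mapsto b_\eps$ only says the function lies \emph{below} its chords, i.e.\ $b_\eps\leq$ the linear interpolation of the breakpoint values---which is the same direction as the bound you already get from the explicit feasible family obtained by shrinking $Q^\ast_{\pi(k)}$. There is no ``lower envelope forced by convexity'': a convex (even piecewise linear convex) function can agree with the interpolating polygon at every breakpoint and still dip strictly below it in between, the dips being absorbed by the kinks; and asserting that the family's ``$(\eps,b_\eps)$ coordinates move linearly'' presupposes exactly the optimality you have not proved. What is missing is the reverse inequality $b_\eps\geq (1-\lambda')\,t_{k-1}+\lambda'\,t_k$ at intermediate $\eps$. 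The paper supplies it with a dual feasible pair, $\mu=g_{\pi(k)}/r_{\pi(k)}$ together with $\tau=\sum_{j<k}(\mu r_{\pi(j)}-g_{\pi(j)})e_je_j^T$, whose dual objective evaluates to the chord value. Alternatively, your own argument closes the gap with one extra observation: for the randomized threshold test $Q^{(\theta)}$ with $Q^{(\theta)}_{\pi(k)}=\theta$, the summand at $i=\pi(k)$ is $(r_i-\lambda g_i)(\theta-Q_i)=0$ because $r_{\pi(k)}=\lambda g_{\pi(k)}$, and all other summands are nonnegative exactly as before; hence each $Q^{(\theta)}$ is optimal at its own error level and $b_\eps$ coincides with the linear interpolation. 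As written, however, the interpolation claim is unsupported.
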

\noindent The proof appears in Appendix~\ref{section:QuasiorderApp}.

%
%
%
%
\section{One-shot work yield and cost}
\label{section:OneShotWork}

Let us quantify the work required to create, and the work extractable from, one copy of a state $R  =  (r, H, N)$ via protocols that can fail, as realistic protocols can. Upon motivating the calculation, we introduce the hypothesis-testing entropy $\Dh^\eps$, incorporate a failure probability into equilibrating operations, and define work in thermodynamic resource theories. Finally, we calculate the extractable work and bound the work cost. Proofs appear in Appendix~\ref{section:FaultyWork}. 

Conventional thermodynamics concerns the average work $\langle W_{\rm gain} \rangle$ extractable from, and the average cost $\langle W_{\rm cost} \rangle$ of creating, states by infallible protocols in the asymptotic limit. In the \emph{asymptotic limit}, or thermodynamic limit, infinitely many identical copies of $R$ are extracted from or created. $\langle W_{\rm gain} \rangle$ and $\langle W_{\rm cost} \rangle$ depend on the Shannon entropy $S_{\rm S}$, itself an average:
\begin{equation}
   S_{\rm S}(r)
   :=   \sum_i   r_i  \ln  r_i
   =    \langle \ln r_i \rangle_r.
\end{equation}

If few copies of a state are extracted from or created, the average cost or yield quantifies the protocol's efficiency poorly. Alternatives to $S_{\rm S}$, called \emph{one-shot entropies}, quantify efficiencies in information-processing (e.g.,~\cite{RennerThesis,tomamichel_leftover_2011,renes_noisy_2011,renes_one-shot_2012,wang_one-shot_2012})
and statistical-mechanics (e.g.,~\cite{DahlstenRRV11,DelRioARDV11,EgloffDRV12,FundLimits2,BrandaoHNOW13,Dahlsten13}) problems
that involve few systems or trials. In addition to involving finite numbers, realistic protocols have nonzero probabilities of failing to accomplish their purposes. Failure probability has been incorporated into one-shot entropies as a parameter $\epsilon$~\cite{RennerThesis,jensen_generalized_2013}.

One alternative to $S_S$ is the \emph{hypothesis-testing entropy} $\Dh^\eps$. $\Dh^\eps$ is defined in terms of the hypothesis test quantified in Eq.~\eqref{eq:opttypeIIerror}. The work extractable from, and the work cost of creating, one copy of a state $R$ will be quantified with $\Dh^\eps$.
\begin{definition} \label{eq:DEpsH}
The \emph{hypothesis-testing relative entropy} between quantum states $\rho$ and $\gamma$ is defined as
\begin{align}
\Dh^\eps(\rho   ||   \gamma) :=  -\ln b_\eps(\rho   ||   \gamma)
\end{align}
or, equivalently, by $b_\eps(\rho   ||   \gamma)=e^{-\Dh^\eps(\rho   ||   \gamma)}$.
\end{definition}
\noindent Let us incorporate failure probability and work into thermodynamic resource theories.

A faulty operation is defined as a transformation whose output approximates the desired output. Operationally, a state $R'$ approximates a state $R$ if no testing procedure consistent with quantum mechanics can reliably distinguish the states. For simplicity, we focus on approximations $R'$ that differ from $R$ only because of their density operators: If $R = (\rho, H, N)$, then $R' = (\rho', H, N)$.  If $R'$ approximates $R$, we write $R' \approx_\eps R$ and say that $R'$ is \emph{$\epsilon$-close} to $R$. (Equivalently, we write $\rho' \approx_\eps \rho$ and say that $\rho'$ is \emph{$\eps$-close} to $\rho$.)
Since density operators' distinguishability is related to the trace distance, we define 
$R'\approx_\epsilon R$ by $\tfrac12\left\|\rho'-\rho\right\|_1\leq \epsilon$. 


%
%
%
We define work in terms of the changing of the energy level occupied by a \emph{battery}. In statistical physics and mechanics, work is defined as an integral along a path in real space or in phase space. Quantum states can follow paths along which work integrals cannot easily be calculated~\cite{Crooks08}. Work on and by quantum systems has been defined more operationally in terms of a ``work bit'' that has a gap $W$~\cite{FundLimits2} and in terms of a weight that stores gravitational potential energy~\cite{SkrzypczykSP13}. We define work similarly to~\cite{SkrzypczykSP13}.

Our battery ${B}$ is any system that has the following qualities: 
(a) The energies in the range accessed by the agent are finely spaced. (b) The battery occupies an energy eigenstate, being a reliable energy reservoir. By ${B}_E$, we denote the battery resource $( \ketbra{E}{E},  H,  N )$. We assume $\beta E\gg 1$, for if $\beta E\approx 1$, agents can use the battery's equilibrium state to drive processes that require energy $E$. Such a use would contradict our physical notion of useful work. 

Having defined $B$, we can define the work extractable from, and the work cost of, a state transformation. If the battery transitions from $B_{E_i}$ to $B_{E_f}$ while $R$ transforms into $S$, the transformation outputs the work $E_f - E_i$ (which is negative if the transformation costs work). 
The $\epsilon$-work value of a resource ${R}$ is defined as the greatest $W$ for which  
\begin{align}   \label{eq:workvalue}
   R\comp  {B}_E   \gorder_\epsilon {B}_{E+W}.
\end{align}
The work cost of $\epsilon$-approximately creating $R$ is the least $W$ such that
\begin{align}   \label{eq:workcost}
   {B}_{E+W}   \gorder_\epsilon R\comp {B}_E.
\end{align}
Formally,
\begin{align}
   \Wext (R)   &=   \max   \{W   :   R  \comp   {B}_E    
      \gorder_\epsilon 
      {B}_{E+W},   \:   \beta E\gg 1\}\\
   \Wform (R)   &=   \min   \{W   :   {B}_{E+W}   
      \gorder_\epsilon    
      {R}   \comp   {B}_E,   \:   \beta E\gg 1   \}.
\end{align}

The simplicity of our battery model facilitates calculations.
Realistic features could be incorporated as follows.
First, the battery could occupy a mixed state, 
or a superposition of energy eigenstates, at any stage in either protocol.
Second, the system and battery could begin or become entangled.
Such entanglement could be analyzed as in~\cite{Frenzel14}.
Frenzel \emph{et al.} point out that a classical field 
is often assumed to raise and lower a quantum system's energy levels.
But fields are not classical and become entangled with the system.
A battery might become entangled similarly.

Having defined work, we state the work value, and bound the work cost, of $R$. 

\begin{theorem}
\label{thm:extractMain} 
The $\epsilon$-work value of a state $R = (r, H, N)$ associated with the free state
$G_R = (g_R, H, N)$ is
\begin{align}  \label{eq:WExt}
\Wext(R)   =   \tfrac1\beta\Dh^{\eps}(r   ||   g_R).
\end{align}
The work cost of creating an $\epsilon$-approximation to a state $R$ is bounded by 
\begin{align}  \label{eq:WForm}
 \max_{\delta\in (0,1-\eps]}  \left[
     \tfrac1\beta \Dh^{1 - \eps - \delta}(r   ||   g_R)-\tfrac1\beta\log  \left( \tfrac1\delta \right)   \right]
   \leq \Wform(R)
&\leq \tfrac1\beta\Dh^{1 - \eps}(r   ||   g_R)-\tfrac1\beta\log  \left( \tfrac {1-\eps}{\eps} \right).
\end{align}
\end{theorem}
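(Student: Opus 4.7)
The plan is to translate both the work-value and the work-cost problems into $d$-majorization conditions via Theorem~\ref{theorem:FreeOpMajor}, then read off hypothesis-testing entropies by comparing Lorenz curves (Proposition~\ref{prop:equimajorization}) and invoking Lemma~\ref{lem:Lorenzhypotest}. The key structural observation is that because $R \comp B_E$ places all of its weight at the single battery level $E$, its rescaled Lorenz curve is just a horizontal rescaling of $L_R$: for $t \leq g_{B,E}$ one has $L_{R \comp B_E}(t) = L_R(t/g_{B,E})$, with $L_{R \comp B_E}(t) = 1$ afterwards. The target $B_{E+W}$ has the ramp curve $L_{B_{E+W}}(t) = \min(t/g_{B,E+W},\,1)$. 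Since $\beta E \gg 1$, the relevant ratio $g_{B,E+W}/g_{B,E}$ reduces to $e^{-\beta W}$.

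For the work-yield direction, the $\eps$-closeness condition $\tilde B \approx_\eps B_{E+W}$ forces $\tilde b_{E+W} \geq 1-\eps$, so $L_{\tilde B}(g_{B,E+W}) \geq 1-\eps$. Lorenz dominance evaluated at $t = g_{B,E+W}$ gives $L_R(e^{-\beta W}) \geq 1-\eps$, which via Lemma~\ref{lem:Lorenzhypotest} is equivalent to $e^{-\beta W} \geq b_\eps(r\|g_R)$, i.e.\ $\beta W \leq \Dh^\eps(r\|g_R)$. Matching achievability uses the explicit candidate that places weight $1-\eps$ at energy $E+W$ and distributes the remaining $\eps$ across the other battery levels proportionally to their Gibbs weights; a chord comparison based on concavity of $L_R$ then verifies $L_{R \comp B_E} \geq L_{\tilde B}$ pointwise.

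For the work-cost direction, the upper bound is obtained by smoothing $R$ with a Neyman--Pearson threshold test. Let $\lambda$ be the threshold at which the test accepting levels with $r_i/g_{R,i} > \lambda$ satisfies $\sum_i Q_i r_i = \eps$. Because $Q_i > 0$ forces $r_i > \lambda g_{R,i}$, one has $e^{-\Dh^{1-\eps}(r\|g_R)} = \sum_i Q_i g_{R,i} \leq \eps/\lambda$. Zeroing the above-threshold levels of $r$ and renormalizing yields $\tilde r \approx_\eps r$ with $\|\tilde r/g_R\|_\infty \leq \lambda/(1-\eps)$. Since $B_{E+W} \gorder \tilde R \comp B_E$ reduces, via Lorenz dominance and the concavity of $L_{\tilde R}$, to the initial-slope inequality $\|\tilde r/g_R\|_\infty \leq e^{\beta W}$, the stated upper bound on $\Wform$ follows. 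For the lower bound, suppose $B_{E+W} \gorder X$ with $X \approx_\eps R \comp B_E$. Data processing for $\Dh^{\eps'}$ gives $\Dh^{\eps'}(e_{E+W}\|g_B) \geq \Dh^{\eps'}(x\|g_X)$; trace-distance stability of $b_{\eps'}$ gives $\Dh^{\eps'}(x\|g_X) \geq \Dh^{\eps'-\eps}(r \otimes \ketbra{E}{E} \| g_R \otimes g_B)$; and the latter equals $\Dh^{\eps'-\eps}(r\|g_R) - \log g_{B,E}$ by additivity across the product-with-eigenstate factor. Using $\Dh^{\eps'}(e_{E+W}\|g_B) = -\log[(1-\eps')\,g_{B,E+W}]$ and reparametrizing $\eps' = 1-\delta$ yields $\beta \Wform \geq \Dh^{1-\eps-\delta}(r\|g_R) - \log(1/\delta)$ for every $\delta \in (0, 1-\eps]$.

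The principal obstacle is the tightness of the achievability constructions. In the extraction direction, verifying that the Gibbs-tailed $\tilde B$ is Lorenz-dominated by $R \comp B_E$ throughout the entire middle region $(g_{B,E+W},\, g_{B,E})$, rather than merely at the distinguished point $g_{B,E+W}$, requires a careful slope argument leveraging concavity of $L_R$. In the cost direction, obtaining the clean bound $\lambda \leq \eps \exp[\Dh^{1-\eps}(r\|g_R)]$ hinges on the correct handling of the Neyman--Pearson boundary (where randomized tests may be needed); this is what produces the correction $-\log[(1-\eps)/\eps]$ in the upper bound and, together with the free smoothing parameter $\delta$ in the converse, generates the one-shot gap between $\Wext$ and $\Wform$.
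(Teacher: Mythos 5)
Your proposal is correct and yields exactly the paper's bounds, but it travels a somewhat different technical route. You reduce everything to Lorenz-curve geometry: the rescaling identity $L_{R\comp B_E}(t)=L_R(t/g_{B,E})$, a chord/concavity argument for extraction achievability, the initial-slope criterion (``ramp dominates concave curve'') for the creation protocol, and a Neyman--Pearson truncation to build the smoothed state $\tilde r$ with $\|\tilde r/g_R\|_\infty\leq\lambda/(1-\eps)$ and $\lambda\leq\eps\,e^{\Dh^{1-\eps}(r\,||\,g_R)}$; the cost converse is the standard chain of data processing, trace-distance stability, and evaluation of $\Dh^{\eps'}(e_{E+W}\,||\,g_B)$. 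The paper instead argues operationally and via semidefinite-programming duality: for extraction it builds an explicit measure-and-prepare equilibrating map from the optimal test $Q$ and a Gibbs-tailed state $\tilde g'$, pinning down $W$ by Gibbs preservation; for the cost lower bound it transports a dual-feasible pair $(\mu,\tau)$ for $(e_{E+W},g')$ through the assumed channel; and for the cost upper bound it constructs $\tilde r$ from the dual optimizers via $r'=TrT^\dagger$, $T=g_R^{1/2}(g_R+\tau)^{-1/2}$, and verifies the existence condition through the $f_a$-divergence functions $K_{\rm in},K_{\rm out}$ of Proposition~\ref{prop:equimajorization}(c) rather than through slopes of Lorenz curves. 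The two routes are mathematically equivalent (your slope criterion is the Lorenz-curve face of the paper's ``$K_{\rm in}$ linear, $K_{\rm out}$ convex, so check one point'' argument, and your rescaling identity is precisely Lemma~\ref{lem:hypoconvert}, so it is available to you); yours is arguably more elementary and geometric in the quasiclassical setting, while the paper's dual-SDP formulation keeps explicit feasible tests in hand, which is what its second-order-asymptotics and quantum-flavored statements lean on. Two small points: the ratio $g_{B,E+W}/g_{B,E}=e^{-\beta W}$ is exact, so $\beta E\gg1$ is not needed there (it only serves the physical interpretation of the battery), and your boundary-randomization caveat in the Neyman--Pearson step is indeed the only place requiring care, and it goes through as you indicate.
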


A proof appears in Appendix~\ref{section:FaultyWork}.
Each expression in the theorem contains an entropy $D^\varepsilon_{\rm H} (r || g_R)$, for some error probability $\varepsilon$. The factor $\frac{1}{\beta}$ introduces dimensions of energy. Each bound contains a logarithmic correction.

Theorem~\ref{thm:extractMain} bounds optimal efficiencies.
Thermodynamic optima tend to characterize 
physically unrealizable processes.
Example processes include quasistatic, or infinitely slow, evolutions.
Experiments cannot proceed infinitely slowly.
What implications can Theorem~\ref{thm:extractMain} have
for real physical processes?
As a process is performed increasingly slowly,
its efficiency is expected approach our predictions.
A similar approach has been reported in~\cite{Koski14}.
Koski \emph{et al.} erased a bit of information repeatedly.
As the erasure's speed dropped,
the amount of heat dissipated dropped 
to near the Landauer limit.

%
%
%
%
\section{Work yield and cost of many copies of a resource}  \label{section:ManyCopies}

From the previous section's one-shot work quantities, we can recover results reminiscent of traditional thermodynamics and can compare how $\Wext$ differs from $\Wform$ as the thermodynamic limit is approached. We denote $n$ copies of $R = (r, H, N)$ by $R^{\otimes n}=(r^{\otimes n},\sum_{i=1}^n H_i,\sum_{i=1}^n N_i)$. In the thermodynamic limit, or asymptotic limit, $n \to \infty$. Also in the limit, we show, $\Wext(R^{\otimes n})$ and $\Wform(R^{\otimes n})$ tend to the difference between the grand potential of $R$ and the grand potential of the associated equilibrium state $G_R$. As the thermodynamic limit is approached, $\Wext(R^{\otimes n})$ and $\Wform(R^{\otimes n})$ differ by terms of order $\sqrt n$. 

%
%

To derive the thermodynamic limits of Eq.~(\ref{eq:WExt}) and Ineqs.~(\ref{eq:WForm}), we invoke the Asymptotic Equipartition Property of $D^\epsilon_{\rm H}$~\cite{jensen_generalized_2013}:
\begin{equation}  \label{eq:AEP}
   \lim_{n \to \infty}   \frac{1}{n}
   D^\epsilon_{\rm H} ( r^{\otimes n}  ||  s^{\otimes n} )
   =   D( r || s )
   \quad \forall \epsilon \in (0, 1),
\end{equation}
wherein $r = (r_1, r_2, \ldots, r_d)$ and 
$s = (s_1, s_2, \ldots, s_d)$ denote probability distributions over the same alphabet.
We have used the definition
\begin{equation}
   D(r || s)   :=   \sum_{i=1}^d   r_i   \ln  \left(   \frac{ r_i }{ s_i }   \right)
\end{equation}
of the relative entropy, defining $0 \ln 0 = 0$~\cite{CoverT12}.

Applying Eq.~(\ref{eq:AEP}) to Eq.~(\ref{eq:WExt}) and to both sides of Ineqs.~(\ref{eq:WForm}) yields
\begin{equation}   \label{eq:Asymp1}
   \lim_{n \to \infty} \frac{1}{n}
      \Wext (R^{\otimes n} )
   =   \lim_{n \to \infty} \frac{1}{n}
        \Wform (R^{\otimes n} )
   =   \frac{1}{\beta}  D( r || g_R ).
\end{equation}
In the asymptotic limit, the bounds in Ineqs.~(\ref{eq:WForm}) converge. All strategies of work extraction and state formation, from risky ($\eps\approx 1$) to conservative ($\eps\approx 0$), become equivalent. 

To understand Eq.~(\ref{eq:Asymp1}) further, we invoke the definition of the relative entropy:
\begin{align}
   \frac{1}{\beta}   D(r||g_R) 
   &=  \frac{1}{\beta}   \sum_i r_i \left(\ln r_i - \ln \frac{e^{-\beta(E_i-\mu n_i)} }{Z}\right)\\
   &=    \langle H\rangle_r   - T  \cdot \kb  S_{\rm S}(r)   -   \mu \langle N\rangle_r +  \kb T \ln Z\\
   & =  \Phi_{\beta,\mu} (R)   -    \Phi_{\beta,\mu} (G_R).
\end{align}
Recall that $\Phi := E - TS - \mu N$ denotes the grand potential, and $- \kb T \ln Z$ denotes the equilibrium state's free energy, in conventional thermodynamics. Using one-shot information theory, we have recovered the convergence, in the asymptotic limit, of a state's average work cost and average work yield to a difference between free energies. 

Equation~(\ref{eq:Asymp1}) implies that all resources can be reversibly converted into one another in the asymptotic limit. For any states $R$ and $S$ and for fixed $\epsilon$, there exists an $n$great enough that $R^{\otimes n}\gorder_\epsilon S^{\otimes {m_n}}$ for some $m_n \geq 1$. To create $m_n$ copies of $S$ from $n$ copies of $R$, one extracts all the work possible from $R^{\otimes n}$, then constructs 
$(S^{\otimes m_n})'  \approx_\eps  S^{\otimes m_n}$ from the work. We define the optimal asymptotic conversion rate $\mathsf{R}(R \mapsto S)$ as the asymptotic limit of the supremum of the rates $m_n / n$ achievable by conversion protocols that approximate the desired output arbitrarily well in the asymptotic limit (protocols for which $\epsilon\rightarrow 0$). This rate is
\begin{align}
\mathsf{R}(R \mapsto S) &= \frac{D(r||g_R)}{D(s||g_S)}.
\end{align}

Thus, all nonequilibrium states can be reversibly converted into each other in the asymptotic limit. 
This result may be surprising. One might have thought that resourcefulness can be ``locked'' into one form---energy, particle number, or information---preventing an $R$ whose resourcefulness manifests in energy from transforming into an $S$ whose resourcefulness manifests in particle number. Apparently, such locking does not occur. This asymptotic reversible convertibility resembles that in Helmholtz theories~\cite{BrandaoHORS13} and in the nonuniformity theory~\cite{HHOShort,HHOLong}. Asymptotic reversible convertibility in general resource theories is discussed in~\cite{gour_measuring_2009}.

The Asymptotic Equipartition Theorem dictates the leading order (order-$n$) behavior of $D^\epsilon_{\rm H}$; 
a more-refined analysis reveals the next-leading-order terms. Applying techniques from information theory, we show in Appendix~\ref{section:SecondOrderAsymp} that the latter terms are of order $\sqrt n$: 
\begin{align}
\Wext (R^{\otimes n} ) &=   \frac{1}{\beta} [ nD(r||g_R)  -  O(\sqrt n) ], \qquad \text{and}\\
\Wform (R^{\otimes n} ) &=    \frac{1}{\beta} [ nD(r||g_R)+O(\sqrt n) ].
\end{align}
As one might expect, the work cost $\Wform(R^{\otimes n})$ lies above the thermodynamic value, whereas 
$\Wext (R^{\otimes n})$ lies below. This discrepancy contrasts with conventional thermodynamics, according to which a reversible cycle can extract work from $R$ and use that work to recreate $R$. Outside the thermodynamic limit, such reversible cycles are impossible. The resource-theory framework refines the Second Law of Thermodynamics, as discussed in~\cite{BrandaoHNOW13}.

%
%
%
%
\section{Conclusions}

We have extended the resource-theory formulation of thermodynamics beyond heat baths. 
Earlier thermodynamic resource theories model heat exchanges;
but many physical systems exchange heat, particles, volume, magnetization, and other observables.
We model these exchanges with a set of families of thermodynamic resource theories.
Each family corresponds to one one free energy, one type of interaction, and one type of bath. 

To illustrate mathematical results, we focused on grand-potential theories, 
whose free operations conserve energy and particle number. 
We showed, using resource-theory principles, 
why free states must be grand canonical ensembles. We characterized the quasiorder on states by an extension of majorization, here termed \emph{equimajorization}. We showed that equimajorization can be formulated in terms of rescaled Lorenz curves and of the optimal error probability in asymmetric hypothesis testing. 
The hypothesis-testing entropy was shown to be proportional to the amount of work extractable from a state $R$ and to bound the work cost of creating $R$.  In the asymptotic limit $n \to \infty$, 
$\Wext (R^{\otimes n})$ and $\Wform (R^{\otimes n})$ were shown to converge to a difference
$\Phi (R)  -  \Phi (G_R)$ between grand potentials. 
The convergence rates were shown to differ on the order of $\sqrt{n}$.
In the limit, all states were shown to be reversibly interconvertible.

Opportunities for bringing these resource theories closer to experiments remain.
Examples include the finite sizes of heat baths,
catalysts (ancillas that facilitate transformations
while suffering no or little degradation), 
limitations on how much of a resource can be exchanged,
and the speeds with which transformations can be implemented.
In the presence of an infinitely large heat bath
and enough work,
every resource $R$ can be converted into every other.
As the number $n$ of copies $R$ grows large,
the size of the required bath scales only superlinearly with $n$~\cite{BrandaoHORS13}. 
A finite-sized bath limits the resource hierarchy, 
possibly spoiling the interconvertibility of all resource states. 
The effects of the bath's finiteness
might be incorporated as in~\cite{reeb_improved_2014},
which concerns the cost of erasing 
with a small bath (albeit outside the resource-theory framework).
Another open question concerns
how finite-sized catalysts affect the work cost of resource interconversion~\cite{BrandaoHNOW13}. 
Finally, the optimal efficiencies in the present paper
might characterize only quasistatic---infinitely slow---protocols.
Realistic protocols proceed at finite rates.
Extensions of our results to finite speeds may draw inspiration from~\cite{browne_guaranteed_2014,masanes_derivation_2014}.
As noted in the final reference, finite speeds relate directly to the density of accessible bath levels. 

Generalized thermodynamic resource theories 
open a host of realistic thermodynamic systems 
to modeling with resource theories. 
Particular physical platforms call out for modeling:
heat-and-particle exchanges, electrochemical batteries, chemical reactions, etc.
As thermodynamic potentials other than 
the Helmholtz free energy $F$ characterize common experiments, 
generalized thermodynamic resource theories offer opportunities 
for realizing one-shot statistical mechanics experimentally.

%
%
%
%
\section*{Acknowledgements}
NYH is grateful for conversations with Tobias~Fritz, Iman~Marvian, Markus~M\"{u}ller, Brian~Space, and Rob~Spekkens. JMR acknowledges helpful conversations with Michael Walter. This work was supported by a Virginia Gilloon Fellowship, an IQIM Fellowship, NSF grant PHY-0803371, the Perimeter Institute for Theoretical Physics, the Swiss National Science Foundation (through the National Centre of Competence in Research ‘Quantum Science and Technology’ and grant No. 200020-135048), and the European Research Council (grant No. 258932). The Institute for Quantum Information and Matter (IQIM) is an NSF Physics Frontiers Center that receives support from the Gordon and Betty Moore Foundation. Research at the Perimeter Institute is supported by the Government of Canada through Industry Canada and by the Province of Ontario through the Ministry of Research and Innovation. 
NYH is grateful to Renato Renner for hospitality at ETH Z\"{u}rich during the development of this paper.

\section{Appendices}

%
%
%
%
\begin{appendices}

Below, we prove claims, presented above, about grand-potential theories. We derive the grand-canonical forms of free state vectors; describe the quasiorder; calculate the work $\Wext$ extractable from, and bound the work $\Wform$ required to create, one copy of a state; and show, via second-order asymptotics, that $\Wext$ does not always equal $\Wform$.

%
%
%
%
\section{Derivations of free states' forms}  \label{section:FreeStateApp}

\subsection{Proof of Theorem~\ref{thm:freestates}(a)}
The state vectors of the free states in grand-potential theories are shown to be grand canonical ensembles. We first review the derivation, in~\cite{HHOLong}, of the forms of the free states in the resource theory of nonuniformity, which models closed isolated systems~\cite{YungerHalpern14}. From the nonuniformity result, we deduce the canonical form of the free states in Helmholtz theories. From the Helmholtz result, we bootstrap to grand-potential theories.\footnoteref{FreeStateNote} 

Free operations in the nonuniformity theory are called \emph{noisy operations}. Each noisy operation consists of three steps: Any free state $u$ (whose form is to be derived) can be created, any permutation $\pi$ can be implemented, and any subsystem $A$ can be discarded (marginalized)~\cite{HHOShort,HHOLong,GourMNSYH13}:
\begin{equation}
   r   \mapsto   \sum_A  \pi(  r  \otimes  u ).
\end{equation} 
Resource states are defined here as states that are not free $u$'s (or, in general thermodynamic resource theories, states that are not free $G$'s) that appear explicitly in the definition of free operations. A resource theory is \emph{trivial} if its free operations alone can generate resource states.

As shown in~\cite{HHOLong}, the free states must be uniform probability distributions, lest the nonuniformity theory be trivial. (Indeed, the quasiorder of resources becomes trivial: From enough copies of any state, free operations can generate any other state.) The free states' form is derived as follows~\cite{HHOLong}: Suppose that some nonuniform state $u_0$ is free. By Shannon compressing many copies of $u_0$~\cite{NielsenC10}, agents can create pure states $(1, 0, 0, \ldots, 0)$ for free. Via noisy operations, agents can create noise for free. Able to generate purity and noise, free operations can generate arbitrary states. Only if all free states are uniform is the nonuniformity theory nontrivial.

General thermodynamic resource theories contain the nonuniformity theory as a special case. In grand-potential theories, for example, free operations can arbitrarily permute levels within each sector $S_{E,N}$ that corresponds to one energy $E$ and one particle number $N$. Hence the weight that each free state has on a sector $S_{E, N}$ is distributed uniformly across the levels in $S_{E, N}$. We call this uniformity the \emph{uniform-eigensubspace condition}. The condition is defined in Helmholtz theories as follows.

\begin{definition}
Let $R = (r, H)$ denote a state in any Helmholtz theory, wherein $r = (r_1, \ldots, r_d)$. $R$ obeys the \emph{uniform-eigensubspace condition} if, for every degenerate eigenvalue $E$ of $H$, all the $r_i$ associated with $E$ equal each other. 
\end{definition}

Let us apply the nonuniformity-theory argument to the uniform-eigensubspace condition. 

\begin{proposition}
The free states in each thermodynamic resource theory obey the uniform-eigensubspace condition. If the free states disobeyed the condition, there would exist resources $R$ that equilibrating operations alone could generate:  $G \mapsto  R$.
\end{proposition}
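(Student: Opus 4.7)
The plan is to imitate the nonuniformity-theory derivation of~\cite{HHOLong} inside a single joint eigensubspace of the conserved operators, where the conservation constraints become vacuous. Suppose a purported free state $G = (\gamma, H, N, \ldots)$ violates the uniform-eigensubspace condition: there is a joint eigensubspace $S \subset \mathcal{H}$ of the conserved operators, with $\dim S \geq 2$, on which $\gamma$ assigns weights $p = (p_1, \ldots, p_{\dim S})$ that are not all equal.

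First I would observe that any unitary supported on $S$ (and trivial on $S^\perp$) is an equilibrating operation, since each conserved operator restricted to $S$ is proportional to the identity and commutes with the identity on $S^\perp$. Tensoring copies of $\mathcal{H}$ together, the same conclusion holds for unitaries supported on the tensor-power subspace $S^{\otimes n} \subset \mathcal{H}^{\otimes n}$. Because each of the $n$ copies of $\mathcal{H}$ carries its own Hamiltonian and number operator, discarding any subset of copies is also allowed by Definition~\ref{def:equiop}(c). Combined, these free moves realize exactly the noisy-operations toolkit of the nonuniformity theory on the ``alphabet'' $S$.

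Next I would feed this toolkit many copies of $G$. On $S^{\otimes n}$, $\gamma^{\otimes n}$ restricts to the product distribution $p^{\otimes n}$, which is non-uniform whenever $p$ is. The Shannon-compression argument of~\cite{HHOLong} exhibits a permutation of $S^{\otimes n}$ that maps the typical set of $p^{\otimes n}$ into a subspace of dimension $\approx e^{n H(p)} \ll (\dim S)^n$; applying it and then tracing out a well-chosen subset of the $n$ copies leaves a state on the remaining tensor factors that is $\epsilon_n$-close, in trace distance, to a state sharply supported on a proper subspace of $S^{\otimes k}$. For $n$ large enough that $\epsilon_n$ falls below a threshold depending only on $p$, the output has non-uniform weights on some joint eigensubspace of the remaining conserved operators and hence violates the uniform-eigensubspace condition. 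It is therefore a resource, and $G^{\otimes n} \mapsto R$ by equilibrating operations alone.

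The main obstacle will be bookkeeping the weight of $\gamma^{\otimes n}$ that does not lie in $S^{\otimes n}$. Letting the compression unitary act as the identity on the orthogonal complement still preserves the conserved quantities, but it pollutes the output state with that complement. One clean fix is to precede compression by an energy- and particle-number-preserving unitary on system-plus-ancilla that swaps the $S^{\otimes n}$ component into a dedicated subsystem and sweeps the complementary component into an ancilla with its own Hamiltonian and number operator, which is then discarded via Definition~\ref{def:equiop}(c); the ancilla's free state drawn in step~(a) of the same definition supplies the extra Hilbert space. Once this reshuffling is in place, the reduction to the nonuniformity theory is immediate and the proposition follows.
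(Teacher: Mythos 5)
There is a genuine gap, and it sits in your concluding inference. You argue that the output state ``has non-uniform weights on some joint eigensubspace \ldots\ and hence violates the uniform-eigensubspace condition. It is therefore a resource.'' But the whole proposition is being proved under the hypothesis that the free states themselves violate the uniform-eigensubspace condition, so violating that condition cannot serve as a certificate of being a resource --- the assumed free state is such a violator. Being a resource means differing from the (unique, per pair $(H,N)$) free state associated with the output's conserved operators, e.g.\ differing from $\gamma^{\otimes k}$ on the retained copies; you never make that comparison, so the last step of the argument is circular as written. (The $\epsilon_n$-approximation bookkeeping and the swap-to-ancilla fix for the weight outside $S^{\otimes n}$ are additional loose ends, but they are secondary.)

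Once the criterion is corrected, you will find that almost all of your machinery is unnecessary, and this is exactly how the paper argues. Any unitary supported on a degenerate joint eigensubspace $S_{E,N}$ commutes with $H$ and $N$ and is therefore an equilibrating operation with no bath and no discarding; applied to a free state whose weight $p$ on $S_{E,N}$ is distributed nonuniformly, even a single transposition of two levels inside $S_{E,N}$ produces a state with the same $(H,N)$ but a different eigenvalue vector. By the uniqueness of the free state for each $(H,N)$, that output is not free, so $G \gorder R$ for a resource $R$ --- one free unitary suffices. In other words, the very first step of your construction already proves the proposition; the many-copy Shannon-compression and typical-subspace argument you import from the nonuniformity theory is what the paper uses later, to pin down the Boltzmann form of the free states and to establish the triviality claim of Theorem~\ref{thm:freestates}(b), not to prove this proposition.
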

\noindent Suppose that free states disobeyed the uniform-eigensubspace condition. Each free state's state vector $g$ would have some weight $p$ on each sector $S$ that corresponds to some energy, some particle number, etc. Equilibrating operations could distribute $p$ arbitrarily across the levels in $S$ but could not change the value of $p$.

The uniform-eigensubspace condition implies the following three lemmas, which complete our derivation of the canonical form of the free states in Helmholtz theories.

\begin{lemma}   \label{lemma:FreeStateFrac}
Let $H_1$ and $H_2$ denote any Hamiltonians that share an energy gap $\Delta$ (and whose spectra are discrete). Let $E_1$ and $E_1+\Delta$ denote eigenvalues of $H_1$, and let $E_2$ and $E_2+\Delta$ denote eigenvalues of $H_2$. 
Define $G_1=(g_1,H_1)$ as a Helmholtz-theory state whose weights on $E_1$ and $E_1 + \Delta$ are $g_1(E_1)$ and $g_1(E_1+\Delta)$.
Define $G_2=(g_2,H_2)$, $g_2(E_2)$, and $g_2(E_2 + \Delta)$ analogously.
If $G_1\comp G_2$ satisfies the uniform eigensubspace condition, the ratio of the weights depends only on the gap: 
\begin{equation}   \label{eq:GapRatio}
   \frac{  g_1(E_1 + \Delta)  }{  g_1(E_1)  }
   =  \frac{  g_2(E_2 + \Delta)  }{  g_2(E_2)  }.
\end{equation}
\end{lemma}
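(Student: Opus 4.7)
The plan is to exploit the degeneracies that arise in the composite spectrum of $H_1 + H_2$. In the composition $G_1 \comp G_2$, the joint Hamiltonian is $H_1 + H_2 = H_1 \otimes \id + \id \otimes H_2$, and the two product eigenstates $\ket{E_1 + \Delta} \otimes \ket{E_2}$ and $\ket{E_1} \otimes \ket{E_2 + \Delta}$ both carry total energy $E_1 + E_2 + \Delta$. These two orthogonal vectors therefore lie in a common eigensubspace of the joint Hamiltonian, even though neither $H_1$ alone nor $H_2$ alone need have any degeneracy at $E_1$ or $E_2$.

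By hypothesis, $G_1 \comp G_2$ obeys the uniform-eigensubspace condition, so the diagonal entries of $\gamma_1 \otimes \gamma_2$ on the two product vectors above must coincide. Because the composite density operator is a tensor product, each such diagonal entry factorises into the corresponding single-system weights, giving
\[
g_1(E_1 + \Delta)\, g_2(E_2) \;=\; g_1(E_1)\, g_2(E_2 + \Delta).
\]
Dividing through by $g_1(E_1)\, g_2(E_2)$ yields Eq.~\eqref{eq:GapRatio}. (Both factors may be taken positive, since any energy at which a free state vanishes can simply be excluded from the domain of the ratio.)

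The only subtlety I foresee concerns well-definedness, since $H_1$ or $H_2$ may themselves have degenerate eigenvalues and $g_i(E)$ is being used as a function of the energy alone. This is not a real obstacle: the uniform-eigensubspace condition on $G_1 \comp G_2$ already forces $G_1$ alone (and $G_2$ alone) to satisfy the same condition. Indeed, any two degenerate eigenvectors of $H_1$ with common eigenvalue $E$, paired with a single eigenvector of $H_2$ of eigenvalue $E_2$ with $g_2(E_2) > 0$, yield two product states of common total energy $E + E_2$; equating their joint weights and dividing by $g_2(E_2)$ forces the two single-system weights in $G_1$ to coincide. Thus the shorthand $g_i(E)$ is unambiguous per energy level, and no additional work is needed beyond the one-line observation above; I expect this lemma to be the easiest step in the derivation of the canonical form of the free states.
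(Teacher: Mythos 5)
Your proposal is correct and follows essentially the same route as the paper: both identify the degeneracy of the composite at total energy $E_1+E_2+\Delta$, apply the uniform-eigensubspace condition to equate $g_1(E_1)\,g_2(E_2+\Delta)$ with $g_1(E_1+\Delta)\,g_2(E_2)$, and divide to obtain Eq.~\eqref{eq:GapRatio}. Your added remarks on positivity and on the well-definedness of $g_i(E)$ under degeneracies are minor refinements of the same argument, not a different approach.
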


\begin{proof}  
The eigenenergy $E_1 + E_2 + \Delta$ of $G_1 + G_2$ has a twofold degeneracy. Since $G_1\comp G_2$ satisfies the uniform eigensubspace condition, the weight of 
$g_1 \otimes g_2$ on one degenerate level equals the weight on the other:
\begin{equation}   \label{eq:Fractions}
   g_1(E_1)   g_2( E_2 + \Delta )
   =   g_1(E_1 + \Delta)   g_2 (E_2).
\end{equation}
Since $E_1$ and $E_2$  are arbitrary, each ratio of probabilities depends only on $\Delta$; the other details of $H_1$ and $H_2$ are irrelevant.  
\end{proof}

\begin{lemma}  \label{lemma:CanonLemma2}
Let $G=(g,H)$ denote any free Helmholtz-theory state that has weights $g(E)$ and $g(E + \Delta)$ on either side of an energy gap $\Delta$. The ratio of the weights  varies exponentially with the gap:
\begin{equation}   \label{eq:RatioExp}
   \frac{  g(E + \Delta)   }{   g(E)  }
   =   e^{ - \beta \Delta },
\end{equation}
wherein $\beta \in \mathbbm{R}$.
\end{lemma}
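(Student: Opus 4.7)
The plan is to derive the exponential form by reducing the lemma to the multiplicative Cauchy functional equation. By Lemma~\ref{lemma:FreeStateFrac}, the ratio $g(E+\Delta)/g(E)$ depends only on the gap $\Delta$, not on the baseline energy $E$ or on the detailed form of the Hamiltonian $H$. I therefore define a single positive function $f(\Delta) := g(E+\Delta)/g(E)$, viewed as a function of $\Delta \in \mathbbm{R}$ alone.

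Next, I would establish the multiplicative identity $f(\Delta_1 + \Delta_2) = f(\Delta_1)\,f(\Delta_2)$ for all real $\Delta_1,\Delta_2$. To this end, consider a Hamiltonian $H'$ whose spectrum contains three eigenvalues at $E_0$, $E_0+\Delta_1$, and $E_0+\Delta_1+\Delta_2$. Applying Lemma~\ref{lemma:FreeStateFrac} to each of the three gaps $\Delta_1$, $\Delta_2$, and $\Delta_1+\Delta_2$ appearing among these three levels, then multiplying, yields
\begin{equation*}
f(\Delta_1)\,f(\Delta_2) \;=\; \frac{g(E_0 + \Delta_1)}{g(E_0)} \cdot \frac{g(E_0 + \Delta_1 + \Delta_2)}{g(E_0 + \Delta_1)} \;=\; \frac{g(E_0 + \Delta_1 + \Delta_2)}{g(E_0)} \;=\; f(\Delta_1 + \Delta_2).
\end{equation*}

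Taking logarithms, which is justified because free-state weights are strictly positive on their support, I would set $h(\Delta) := -\ln f(\Delta)$ and obtain the additive Cauchy equation $h(\Delta_1 + \Delta_2) = h(\Delta_1) + h(\Delta_2)$. Invoking a mild regularity hypothesis on $h$---continuity at any single point, boundedness on any interval of positive length, monotonicity, or Lebesgue measurability---pins the solution down to $h(\Delta) = \beta\Delta$ for some real constant $\beta$. Equivalently, $f(\Delta) = e^{-\beta\Delta}$, which is the desired Boltzmann-ratio identity~\eqref{eq:RatioExp}.

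The main obstacle is supplying the regularity step cleanly. Without some such condition, Cauchy's functional equation admits pathological (non-measurable, everywhere-unbounded) solutions unrelated to exponentials, and the previous lemma alone yields the multiplicative identity only pointwise. Physically, one expects free-state probabilities to depend smoothly on the energy levels, so any of the regularity conditions above is natural; in a more rigorous treatment I would justify it by varying the gap $\Delta$ continuously in a family of two-level Hamiltonians, whose free-state probabilities form a continuous one-parameter family bounded in $[0,1]$, thus forcing $f$ to be bounded on any compact interval. Once this regularity is in hand, the standard classification of solutions to Cauchy's equation closes the argument.
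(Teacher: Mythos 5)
Your proposal is correct, and it follows the same broad strategy as the paper---use Lemma~\ref{lemma:FreeStateFrac} to define a universal gap-dependent ratio $f(\Delta)$, derive a functional equation from an auxiliary three-level free state, and close with a regularity hypothesis---but your functional-equation step is genuinely sharper. The paper works with three \emph{equally spaced} levels $E$, $E+\Delta$, $E+2\Delta$ and two copies of the state, which yields only the doubling relation $f(2\Delta)=f(\Delta)^2$, and then asserts that continuity forces an exponential; strictly speaking doubling plus continuity does not suffice (e.g.\ $\ln f(\Delta)=\Delta\,\phi(\log_2|\Delta|)$ with $\phi$ continuous and $1$-periodic also satisfies it), and the argument is really completed only in Lemma~\ref{lemma:CanonLemma3}, where $f(m\Delta)=f(\Delta)^m$ and rational/irrational gap ratios are treated. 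By instead taking a spectrum $E_0$, $E_0+\Delta_1$, $E_0+\Delta_1+\Delta_2$ with \emph{unequal} gaps and telescoping, you obtain the full multiplicative Cauchy equation $f(\Delta_1+\Delta_2)=f(\Delta_1)f(\Delta_2)$ in one step, after which the standard classification of regular solutions finishes the lemma in a self-contained way. The price is one the paper also pays: you must posit a free state for every such auxiliary Hamiltonian (the paper does exactly this with its harmonic-oscillator ``thermometer'' in Lemma~\ref{lemma:CanonLemma3}), and both proofs ultimately rest on an unproved regularity assumption---yours is simply more explicit about why it is needed and how one might justify it.
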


\begin{proof}
Consider a state $G=( g,  H )$ that has three energies separated by gaps $\Delta$:
\begin{equation}  \label{eq:CanonEx}
   H =   E \ketbra{1}{1}   +   (E + \Delta)  \ketbra{2}{2}   
           +   (E + 2 \Delta)  \ketbra{3}{3}   +   \sum_{i=4}^d E_i \ketbra{i}{i}.
\end{equation}
Let us apply Lemma~\ref{lemma:FreeStateFrac} to two copies of $G$, defining $E_1=E$ and $E_2=E+\Delta$. Equation~\eqref{eq:GapRatio} becomes
\begin{equation}   \label{eq:fLog1}
     \frac{   g(E + 2 \Delta)   }{   g(E + \Delta)   }=
   \frac{   g(E + \Delta)   }{   g(E)   }  
   =:f(\Delta),
\end{equation}
which implies
\begin{equation}
   [ f (\Delta) ]^2
   =  \frac{  g(E + 2 \Delta)   }{   g(E +  \Delta)   }
        \frac{  g(E  +  \Delta)  }{  g(E)  }
   =   \frac{   g(E + 2 \Delta)   }{   g(E)   }
   =   f ( 2 \Delta ).
\end{equation}
This scaling implies that, if $f$ is continuous, it is an exponential:
\begin{equation}   \label{eq:fLog2}
   f ( \Delta )   =   e^{ - \beta \Delta }
\end{equation}
for some $\beta \in \mathbbm{R}$.
(The realness of $\beta$ follows from the probabilities' realness.) 

Let $G'$ denote any free state that has the same gap $\Delta$ as $G$.  The composition $G + G'$ obeys the uniform-eigensubspace condition. Hence $G'$ satisfies Eq.~\eqref{eq:fLog2}, by Lemma~\ref{lemma:FreeStateFrac}, even if $G'$ does not have the form in Eq.~\eqref{eq:CanonEx}.
\end{proof}

\begin{lemma}  \label{lemma:CanonLemma3}
All the gaps in all the free states in any given Helmholtz theory correspond to the same $\beta$.
\end{lemma}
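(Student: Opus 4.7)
My plan is to reduce Lemma~\ref{lemma:CanonLemma3} to Cauchy's additive functional equation by exploiting the composition of free states. Lemma~\ref{lemma:FreeStateFrac} already tells us that for any gap $\Delta$ realizable in some free state, the ratio $g(E+\Delta)/g(E)$ depends only on $\Delta$ (not on $G$ or on the starting energy $E$), so I can name the common value $f(\Delta)$. Lemma~\ref{lemma:CanonLemma2} then gives $f(\Delta) = e^{-\beta(\Delta)\,\Delta}$ for some $\beta(\Delta) \in \mathbbm{R}$, where a priori $\beta(\Delta)$ might still depend on $\Delta$. The task is to show that this dependence is spurious.

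The crux is to derive the multiplicative identity $f(\Delta_1 + \Delta_2) = f(\Delta_1)\, f(\Delta_2)$. For this I would pick free states $G_1, G_2$ whose Hamiltonians carry gaps $\Delta_1$ and $\Delta_2$, say between levels $(E_1, E_1+\Delta_1)$ and $(E_2, E_2+\Delta_2)$. The composite $G_1 \comp G_2$ is itself a free state, and its Hamiltonian carries a gap $\Delta_1 + \Delta_2$ between the levels $E_1+E_2$ and $(E_1+\Delta_1)+(E_2+\Delta_2)$. Writing the weights out in the tensor-product basis, the ratio on these two endpoint levels factorizes as $f(\Delta_1)\, f(\Delta_2)$; on the other hand, Lemma~\ref{lemma:FreeStateFrac} applied to $G_1 \comp G_2$ forces this same ratio to equal $f(\Delta_1 + \Delta_2)$. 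Setting $h(\Delta) := -\ln f(\Delta)$ converts the identity into $h(\Delta_1 + \Delta_2) = h(\Delta_1) + h(\Delta_2)$, the classical additive Cauchy equation.

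To finish, I would invoke the continuity of $f$ (the same assumption already baked into Lemma~\ref{lemma:CanonLemma2}) to conclude that the only continuous solutions of Cauchy's equation on the positive reals are linear: $h(\Delta) = \beta \Delta$ for a single constant $\beta \in \mathbbm{R}$. Hence $f(\Delta) = e^{-\beta \Delta}$ with a universal $\beta$, as claimed. The main obstacle in my view is this regularity step: without continuity (or at least measurability or local boundedness), Cauchy's equation admits wild Hamel-basis solutions, so a hypothesis of this kind is unavoidable. A secondary bookkeeping concern is ensuring that every pair of candidate gaps $(\Delta_1, \Delta_2)$ is realized by some pair of free states in the theory; this is automatic as long as the Helmholtz theory is taken to be defined for every discrete-spectrum Hamiltonian, since one can always build a two-level Hamiltonian with any prescribed gap.
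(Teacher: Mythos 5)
Your proof is correct, but it follows a genuinely different route from the paper's. The paper fixes two free states with gaps $\Delta$ and $\Delta'$ and compares them through an auxiliary ``thermometer'' free state with many equally spaced levels, first deriving the power law $f(m\Delta)=f(\Delta)^m$, then treating $\Delta'=\frac{m}{n}\Delta$ (rational ratios) exactly, and finally handling incommensurable gaps by rational approximation. You instead compose two free states carrying gaps $\Delta_1$ and $\Delta_2$ and read off, from the product weights on the levels $E_1+E_2$ and $E_1+E_2+\Delta_1+\Delta_2$ of $G_1\comp G_2$, the multiplicativity $f(\Delta_1+\Delta_2)=f(\Delta_1)f(\Delta_2)$, i.e.\ Cauchy's additive equation for $h=-\ln f$, and then invoke continuity to get a single $\beta$. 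Your route is tidier: it dispenses with the ladder construction and the integer/rational/irrational case analysis, concentrating all the regularity in the one continuity hypothesis the paper already assumes in Lemma~\ref{lemma:CanonLemma2}, and it makes Lemma~\ref{lemma:CanonLemma2} essentially a corollary rather than a separate input. What the paper's version buys in exchange is an explicit operational picture (a finite equally spaced free state acting as a thermometer) and a clear indication of exactly which auxiliary free states the argument consumes; note, though, that for commensurable gaps both arguments are on equal footing, since $\mathbb{Q}$-linearity already follows from your additive equation without continuity. Both proofs share the same implicit background assumptions, which you correctly flag: every prescribed gap is realized by some free two-level system, compositions of free states obey the uniform-eigensubspace condition (needed to apply Lemma~\ref{lemma:FreeStateFrac} to $G_1\comp G_2$ and to identify its ratio with the universal $f(\Delta_1+\Delta_2)$), and the weights are strictly positive so the ratios are well defined.
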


\begin{proof}

Let $G = (g, H)$ and $G' = (g', H')$ denote free states in some Helmholtz theory. Let $\Delta$ denote a gap in $H$; and let $\Delta'  =  n \Delta$, wherein $n$ denotes a positive integer, denote a gap in $H'$. We wish to show that $g$ and $g'$ correspond to the same $\beta$. The proof will be extended to rational proportionality constants, then to arbitrary constants.

Some free state $G'' = (g'', H'')$ has $p > n$ equally spaced levels $E, E+ \Delta, \ldots, E + p \Delta$. For example, $G''$ might denote a harmonic oscillator. This $G''$ will serve as a thermometer that interrelates the temperatures of $G$ and $G'$.
Let $E_1 = E$ and $E_2 = E + m \Delta$ for any $m \in \{1, 2, \ldots, p\}$.
By an argument like the one used to prove Lemma~\ref{lemma:CanonLemma2}, 
\begin{align}
   f( m \Delta )  & :=   \frac{ g''( E + m \Delta) }{ g''(E) }  \\
      & =  \left[  \frac{ g''(E + \Delta) }{ g''(E) }   \right]^m  \\
      & =:  f( \Delta )^m,  \label{eq:FreeStateHelp0}
\end{align}
wherein $g''(E + k \Delta)$ denotes the weight on level $k$.

Consider substituting $m = n$ and $\Delta  =  \frac{1}{n} \Delta'$ into the left-hand side (LHS) of,
$f(m \Delta)  =  f( \Delta )^m$:
\begin{align}  \label{eq:FreeStateHelp}
   f( \Delta' )   =   f \left( \Delta \right)^n.
\end{align}
We have related the ratio of the weights across the gap of $G'$ to the ratio of the weights across the gap of $G$. That is, we have related the temperature of $G'$ to the temperature of $G$.

Now, suppose that $\Delta' = \frac{m}{n} \Delta$.  Consider substituting $m \Delta = n \Delta'$ into the LHS of \mbox{$f(m \Delta)  =  f( \Delta )^m$}:
$f(n \Delta')  =  f (\Delta)^m$.
This equation's LHS also equals $f(\Delta')^n$, by Eq.~\eqref{eq:FreeStateHelp0}. Equating the two expressions for $f(n \Delta')$ yields
\begin{align}  \label{eq:FreeStateHelp2}
   f( \Delta' )   =   f( \Delta )^{m / n}.
\end{align}
We have related the temperature of $G'$ to the temperature of $G$, effectively by considering multiple copies of each state.

Finally, suppose that $\Delta' = \alpha \Delta$, wherein $\alpha$ denotes an irrational number. $\alpha$ can be approximated arbitrarily well by a ratio $m/n$. Arbitrarily many copies of $G$ and $G'$ relate the temperature of $G$ to that of $G'$ via Eq.~\eqref{eq:FreeStateHelp2}.
\end{proof}

Lemmas~\ref{lemma:FreeStateFrac}-\ref{lemma:CanonLemma3}, with the normalization condition, imply that the free states in Helmholtz theories are canonical ensembles. This result will facilitate our proof of Theorem~\ref{thm:freestates} about grand-potential theories.

\begin{theorem*}
Consider any grand-potential resource theory in which each pair $(H,N)$ corresponds to exactly one free state $G = (g, H, N)$. If $g$ is not a grand canonical ensemble, some resources $R$ can be generated solely with equilibrating operations: $G \gorder R$.
\end{theorem*}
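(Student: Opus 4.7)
The plan is to mirror the Helmholtz derivation (Lemmas~\ref{lemma:FreeStateFrac}--\ref{lemma:CanonLemma3}) in the grand-potential setting by working with joint $(H,N)$-eigensubspaces, and then to read the theorem off as the contrapositive. The uniform-eigensubspace condition still applies, now for the joint spectrum of energy and particle number, because equilibrating operations can freely permute levels within any joint $(E,N)$-eigensubspace; the Proposition just above then forces any free $g$ to be uniform on each such subspace.

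First I would generalize Lemma~\ref{lemma:FreeStateFrac}. Given free states $G_1=(g_1,H_1,N_1)$ and $G_2=(g_2,H_2,N_2)$, apply the uniform-eigensubspace condition to $G_1\comp G_2$ on each joint eigensubspace of $H_1+H_2$ and $N_1+N_2$. A \emph{matched gap}---a pair of levels $(i,j)$ and $(i',j')$ with $E^{(1)}_i+E^{(2)}_j=E^{(1)}_{i'}+E^{(2)}_{j'}$ and the analogous identity for $N$---forces $g_1(i)\,g_2(j)=g_1(i')\,g_2(j')$. Rearranging shows that the ratio $g(a)/g(b)$ is a single function $f(\Delta E,\Delta N)$ of the gap pair $(E_a-E_b,\,N_a-N_b)$, shared across every free state realizing those gaps. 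Next I would run the scaling argument of Lemma~\ref{lemma:CanonLemma2} twice: on an auxiliary free state with three equispaced energy levels at fixed $N$ it yields $f(\cdot,0)=e^{-\beta\Delta E}$; on one with three equispaced particle-number levels at fixed $E$ it yields $f(0,\cdot)=e^{\beta\mu\Delta N}$. Matched-gap additivity then forces the factorization $f(\Delta E,\Delta N)=e^{-\beta\Delta E+\beta\mu\Delta N}$, and the rational-approximation argument of Lemma~\ref{lemma:CanonLemma3} fixes the same pair $(\beta,\mu)$ across every free state in the theory. Together with normalization, this pins $g$ to the grand-canonical form $e^{-\beta(H-\mu N)}/Z$.

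The theorem then follows by contrapositive. If $g$ fails to be a grand canonical ensemble, then at least one of the identities above must be violated. Either $g$ already breaks single-copy uniform-eigensubspace---in which case the Proposition directly produces a resource $R$ with $G\gorder R$---or the breakdown is detected on some composite $G\comp G'$, whose weights are then nonuniform on a joint $(E_{\rm tot},N_{\rm tot})$-eigensubspace of $H+H'$ and $N+N'$. Equilibrating operations acting on the composite can freely permute weights within that degenerate subspace to produce a density operator distinct from the designated free one, again giving $G\gorder R$ for some resource $R$. The main obstacle I anticipate is confirming that enough auxiliary free states exist within the theory to realize arbitrary gap pairs $(\Delta E,\Delta N)$ as matched gaps in some composition, and to decouple the energy coordinate from the particle-number coordinate during the scaling step; this should be handled by pairing oscillator-type energy ladders with simple particle-number ladders, so that the Helmholtz-theory lemmas transfer with only notational changes.
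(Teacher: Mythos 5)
Your proposal is correct and takes essentially the same route as the paper: both arguments pin down the grand-canonical form by combining the uniform-eigensubspace condition on composites with the Helmholtz-theory gap-ratio, exponential-scaling, and rational-approximation lemmas, and then obtain the theorem as the contrapositive, the only organizational difference being that the paper bootstraps by decomposing into fixed-particle-number (Helmholtz) and trivial-$N$ subproblems instead of factorizing a bivariate gap function $f(\Delta E,\Delta N)$ directly. Your anticipated obstacle about auxiliary ladder states is settled by the theorem's hypothesis that every pair $(H,N)$ carries exactly one free state, which is the same license the paper uses when it invokes a harmonic-oscillator ``thermometer'' state in Lemma~\ref{lemma:CanonLemma3}.
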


\begin{proof}
First, we show that each element of $g$ has the form
$e^{- \beta( E_i ) E_i   +   \alpha( n_j ) n_j } / Z$, wherein $\beta(E_i)$ and $\alpha(n_j)$ denote functions of the energy and particle number.
Second, by comparing the grand-potential theory with Helmholtz theories, we will show that $\beta$ and $\alpha$ are constant functions.

Consider the most general state vector associated with $H$ and $N$. Each element has the form 
$e^{ - \beta ( E_i ) E_i   +  \alpha ( n_j ) n_j   +  f ( E_i, n_j ) } / Z$,
wherein $f$ denotes some function and $Z$ normalizes the state.
Recall that every Helmholtz-theory problem can be decomposed into single-energy lemmas that are equivalent to nonuniformity-theory problems. Likewise, every grand-potential--theory problem can be decomposed into lemmas that feature just one $n_j$ apiece and that are equivalent to Helmholtz-theory problems. Therefore, the elements of $g$ that correspond to the same $n$ must form a canonical ensemble. (Rather, they would form a canonical ensemble if normalized appropriately.) These $g$ elements could not form a canonical ensemble if $f$ depended on energy nontrivially. Hence $f(E_i, n_j)  =  f( n_j )$. By an analogous argument, $f$ cannot depend on $n_j$. Hence $f$ is a constant, and each element of 
$g$ has the form   $e^{- \beta(E_i) E_i   +   \alpha(n_i) n_i } / Z$.

$G$ could feature in a problem in which every number operator is trivial: $N = 0$. Such a problem is equivalent to a problem in a Helmholtz theory. In each Helmholtz theory, all free states share a $\beta$ that is a constant function of energy. This $\beta$ must characterize the grand-potential theory. Analogously, all free states in the grand-potential theory share an $\alpha \in \mathbbm{R}$. Hence each element of $g$ has the form $e^{ - \beta E_i + \alpha n_j } / Z$. Define $\mu \in \mathbbm{R}$ such that $\alpha = - \beta \mu$.
\end{proof}

\subsection{Proof Sketch of Theorem~\ref{thm:freestates}(b)}

Here we provide a proof sketch for Theorem~\ref{thm:freestates}(b). For simplicity, we consider a Helmholtz-theory context. We must show that, if the free states do not have the Boltzmann form, the quasiorder on states is trivial: Any state can be created from any other by equilibrating operations. We follow an argument by Janzing \emph{et al.}~\cite{Janzing00} about the effective temperatures present in many copies of a non-Boltzmann state. These effective temperatures can be used to cool a qubit. 

Consider the transformation of $R=(r,H)$ into $S=(s,H)$. It suffices to operate successively on pairs of levels, such as the first and the $j^{\rm th}$. An equilibrating operation of the following form converts $r_j/r_1$ into $s_j/s_1$. 
Let $E_j$ denote the gap between the two levels. Following~\cite[Sec.\ 3]{Janzing00}, we suppose that we have access to a free state on three levels, separated by gaps $E_j$, the probabilities on which are not Boltzmann-weighted. As shown by Janzing \emph{et al.}, the product of $n\rightarrow \infty$ copies of this free state contains pairs of levels, separated by $E_j$, characterized by essentially any desired relative probability $p_j/p_1$.  Two levels of the resource-and-free-state composite are degenerate. One degenerate level is the product of the resource's lower level and the free state's upper level; the other degenerate level consists of the reverse. Swapping the degenerate levels is an equilibrating operation. Easy calculation shows that the swap transforms the resource's probabilities into
\begin{align}
{r'_j}=r_j+\delta
\quad {\rm and} \quad
r'_1=r_1-\delta,
\quad\text{wherein}\quad 
\delta=r_1 p_j-r_j p_1.
\end{align}
If $p_j+p_1=1$, any relative probability $r'_j/r_1$ could be reached by appropriate choice of $p_j/p_1$. Generally, $p_j+p_1$ is very small (exponentially small in $n$ in the Janzing \emph{et al.} example). Hence the relative weights of the levels in $R$ can be changed by only a tiny amount. Repeating the procedure sufficiently many times, however, yields the desired relative probability $s_j/s_1$. Unless the free state has Boltzmann weights, therefore, the quasiorder induced by equilibrating operations is trivial.

%
%
%
%
\section{Quasiorder proofs} \label{section:QuasiorderApp}

Let us prove three statements, introduced in Sec.~\ref{section:Quasiorder}, about the quasiorder on states: Theorem~\ref{theorem:FreeOpMajor}, Proposition~\ref{prop:equimajorization}, and Lemma~\ref{lem:Lorenzhypotest}.

%
%
\begin{theorem*}[\ref{theorem:FreeOpMajor}]
Let $R$ and $S$ denote states in the grand-potential theory defined by $(\beta, \mu)$.
There exists an equilibrating operation that maps $R$ to $S$ if and only if $R$ equimajorizes $S$:
\begin{align}
R\gorder S \Longleftrightarrow R \eqmaj S.
\end{align}
\end{theorem*}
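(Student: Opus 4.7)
The proof naturally splits into two directions. For the forward implication ($R \gorder S \Longrightarrow R \eqmaj S$), I would exploit the fact that every operator in sight commutes, so the equilibrating operation reduces to a classical stochastic map on the joint $(H,N)$-eigenbasis. Concretely, given an equilibrating operation with $\sigma = \tr_A(U[\rho\otimes\gamma]U^\dagger)$, the diagonal structure of $\rho$ and $\gamma$ combined with $[U,H_R+H_G]=[U,N_R+N_G]=0$ ensures that only diagonal entries are populated throughout, so the whole operation is encoded by a matrix $M$ whose entries are transition probabilities. Trace preservation immediately gives $\sum_i M_{ij}=1$, and $Mr=s$ holds by construction. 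The remaining identity $Mg_R=g_S$ follows because $\gamma_R\otimes\gamma_G$ is the grand canonical state of the composite Hamiltonian-and-number operator, and hence is invariant under $U$, while the partial trace over $A$ sends a grand canonical state at $(\beta,\mu)$ to a grand canonical state at the same $(\beta,\mu)$ by Theorem~\ref{thm:freestates}.

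For the reverse implication ($R \eqmaj S \Longrightarrow R \gorder S$), the plan is to realize the given stochastic matrix $M$ as an explicit equilibrating operation, adapting the Janzing \emph{et al.} technique. I would first decompose $M$ into a product (or convex combination) of elementary Gibbs-preserving stochastic maps, each acting nontrivially on only two levels at a time. Each elementary move can then be implemented as follows: adjoin a grand canonical bath $G$ whose Hamiltonian and number operator are engineered so that the spectrum of $R\comp G$ contains pairs of eigenstates with identical total energy \emph{and} identical total particle number. Permuting such a degenerate pair is simultaneously $H$- and $N$-conserving, hence a legal free unitary; tracing out the bath afterward yields a classical map on $R$ whose off-diagonal transition probabilities are controlled by the Boltzmann weights $e^{-\beta(E_G-\mu N_G)}/Z_G$ of the bath states being swapped.

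The main obstacle is producing a bath with exactly the right degeneracy pattern and exactly the right weight ratios. Unlike the Helmholtz setting, where bath energies can be tuned freely over $\mathbbm{R}$, here we must match pairs in both energy \emph{and} the integer-valued particle number, so we cannot simply trade $E$ for $\mu N$. The way out is to enlarge the bath to several modes: by allocating particle number to auxiliary modes with tunable energies, any prescribed $(\Delta E,\Delta N)$ transfer between two levels of $R$ can be compensated by a corresponding $(-\Delta E,-\Delta N)$ transition in the bath, and the continuous freedom in the bath's energies lets one tune the resulting Boltzmann ratio to any target value consistent with $Mg_R=g_S$. Verifying that a single bath suffices to realize the full elementary move—and that the induced map on $R$ is \emph{exactly} the prescribed two-level Gibbs-preserving stochastic matrix rather than merely an approximation—is the technical heart of the argument, after which composing elementary moves assembles the full $M$ and closes the theorem.
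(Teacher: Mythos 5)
Your forward direction is sound and is essentially the paper's own argument: commutativity reduces the channel to a stochastic matrix $M$ with $Mr=s$, and $Mg_R=g_S$ follows because energy- and particle-conserving unitaries preserve the composite grand-canonical state and the partial trace returns a grand-canonical marginal. The gap lies in your converse. Your strategy stands or falls on the claim that an arbitrary stochastic $M$ with $Mr=s$ and $Mg_R=g_S$ can be decomposed into (products or mixtures of) two-level Gibbs-preserving moves, each implementable by a single degenerate-pair swap with an engineered bath. Unlike the uniform case, where the T-transform decomposition of Hardy--Littlewood--P\'olya does this job, no such decomposition theorem holds for $d$-majorization: sequences of two-level thermal interactions (``elementary thermal operations'' in the later literature, cf.\ Lostaglio, Alhambra and Perry) are known to reach a strictly smaller set of states than general energy- and particle-conserving bath couplings once $d_R\geq 3$, so the route you sketch cannot establish the full equivalence $R\eqmaj S\Rightarrow R\gorder S$. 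There is also a smaller structural mismatch: $R$ and $S$ may live on different spaces with different $(H,N)$, so $M$ is rectangular and ``two levels of $R$ at a time'' is not even well posed before an embedding step you do not supply. Finally, even for a single pair of levels, the extremal Gibbs-preserving $2\times 2$ matrix (the full ``$\beta$-swap'') is not exactly realizable by swapping one degenerate pair with a finite bath---the induced transition probability is capped by the population of the matched bath level---so your elementary moves are at best approximate, whereas the theorem asserts exact attainability. You flag this last point as the ``technical heart,'' but the decomposition claim above is the more serious obstruction.

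The paper avoids both problems by adapting Janzing \emph{et al.}'s Theorem~5 directly: one draws a single large bath consisting of $n$ copies of $g_R$ and $g_S$ (plus one extra factor), defines \emph{one} permutation $\pi_n$ of the joint eigenstates directly from the matrix $M$, acting within blocks of fixed total energy \emph{and} fixed total particle number (so it is automatically a free unitary), and shows via a typicality argument that as $n\to\infty$ the induced map on the system converges to $M$. No factorization of $M$ into elementary moves is needed, and the matching of both conserved quantities is handled wholesale by the block structure rather than pair by pair. If you want to salvage your approach, you would need either to prove a decomposition theorem for equimajorization into implementable elementary maps (which, as noted, fails in general) or to enlarge your elementary toolbox---e.g., allow multi-level permutations within degenerate blocks of $R\comp G$---at which point you are effectively reconstructing the Janzing-type argument.
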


\begin{proof}

This proof is adapted from the proof of~\cite[Theorem 5]{Janzing00}, a Helmholtz-theory analog of our Theorem~\ref{theorem:FreeOpMajor}.\footnote{
An alternative approach to the Helmholtz-theory analog appears in~\cite{FundLimits2}.}
Let $R = (r, H_R, N_R)$ and $S = (s, H_S, N_S)$. 
By $d_R$ and $d_S$, we denote the numbers of elements in $r$ and $s$. 
By $G_R  =  (g_R,  H_R,  N_R)$ and $G_S  =  (g_S, H_S,  N_S)$, we denote the equilibrium states associated with $R$ and $S$. We begin with the easier part of the proof, showing that the existence of an equilibrating operation implies equimajorization.

Assume that some equilibrating operation maps 
$R$ to $S$:
\begin{equation}
   R \gorder
   ( \mathcal{E} ( R ),   H_S,  N_S )
   = (S, H_S, N_S).
\end{equation}
Let $v = (v_1, \ldots, v_{d_R} )$ denote any vector that contains $d_R$ elements.
The $d_R \times d_S$ matrix $M$ that implements $\mathcal{E}$ can be defined by
\begin{equation}
   M v 
   =  \mathcal{E} ( v ).
\end{equation}

By the definition of $\mathcal{E}$, $M r = s$. Since equilibrating operations map equilibrium states to equilibrium states, $M g_R  =  g_S$. We can see as follows that $M$ is stochastic: If $v$ represents a (normalized) state, $\sum_i v_i = 1$. Equilibrating operations preserve normalization, so 
\begin{equation}
   1  =  \sum_{i=1}^{ d_R }   \left[ \mathcal{E} (v) \right]_i
    =   \sum_{i=1}^{ d_R }  [ Mv ]_i,
\end{equation}
wherein $[ w ]_i$ denotes the $i^{\rm th}$ element of any vector $w$.
Mapping normalized vectors to normalized vectors, $M$ is stochastic.
By Definition~\ref{definition:EquiMajor}, $R \succ_{\beta, \mu} S$.

Now, we proceed to the converse claim.
%
%
Assume that $R  \succ_{\beta, \mu}  S$. One can prove that some equilibrating operation maps $R$ to $S$ by augmenting three lines in the proof of Theorem~5 by Janzing \emph{et al.}~\cite{Janzing00}. 
After outlining the latter proof, we explain how to augment it. 

Janzing \emph{et al.} define a particular energy-preserving transformation implemented with a heat bath; consider the limit as the bath's size approaches infinity; and show that, in the limit, the transformation converts the initial state into the equimajorized state.
Blending their notation with ours, we denote the initial state by $(p, H_p)$, the final state by
$(\tilde{p},  H_{ \tilde{p} } )$, and the associated equilibrium states by 
$(g, H_p)$ and $( \tilde{g},  H_{\tilde{p}} )$. 
The Hamiltonian $H_p$ has $l$ levels, and $H_{ \tilde{p} }$ has $\tilde{l}$ levels.

Janzing \emph{et al.} consider the set $\mathcal{S}_n$ of pure eigenstates of 
$H_p  +  H_p^{  n}  +  H_{ \tilde{p} }^{ n }  +  H_{ \tilde{p} }$, wherein $H^n$ denotes $n$ copies of $H$. Each state in $\mathcal{S}_n$ is characterized by two length-$(n + 1)$ strings. Each letter in the first (second) string is a number between $1$ and $l$ ($\tilde{l}$) that indicates on which energy level of $H_p$ ($H_{\tilde{p}}$) the state's weight lies. Denote by $u_i \in [0, n+1]$ the number of times that $i  \in  [1, l]$ appears in the first string; and by $v_j  \in [0, n+1]$, the number of times that $j  \in  [1, \tilde{l} ]$ appears in the second string. If two states in $\mathcal{S}_n$ correspond to the same pair 
\begin{equation}
   u  =  ( u_1, u_2, \ldots, u_l )
   \quad \text{and} \quad
   v =  ( v_1, v_2, \ldots, v_{ \tilde{l} } ),
\end{equation}
the states correspond to the same energy.
($u$ and $v$ are called $r$ and $s$ in~\cite{Janzing00}.)

A permutation $\pi_n : \mathcal{S}_n  \to  \mathcal{S}_n$ is defined in terms of the matrix assumed to map $p$ to $\tilde{p}$. Because $\pi_n$ maps each input to an output that has the same $(u, v)$, $\pi_n$ conserves energy. $\pi_n$ is applied to the probability distribution $P_n$ defined by 
$p  \otimes   g^{ \otimes n }   \otimes   \tilde{g}^{ \otimes n }   \otimes   \tilde{g}$.
A set $\mathcal{T}_n$ of typical $(u, v)$ tuples is defined in terms of $P_n$ and the limit $n \to \infty$. In this limit, Janzing \emph{et al.} show, $\pi_n$ maps $P_n$ to the distribution defined by
$g   \otimes   g^{ \otimes n }   \otimes   \tilde{g}^{ \otimes n }   \otimes   \tilde{p}$.

To adapt this Helmholtz proof to grand-potential theories, replace $p$, $\tilde{p}$, $g$, and $\tilde{g}$ with $r$, $s$, $g_R$, and $g_S$. If two states in $\mathcal{S}_n$ correspond to the same $(u, v)$, they correspond not only to the same energy, but also to the same particle number. Just as $\pi_n$ conserves energy, it conserves particle number. The rest of the proof in~\cite{Janzing00} shows that, from the equimajorization condition, an equilibrating operation can be constructed. 
\end{proof}

The proof technique used above extends from grand-potential theories to thermodynamic resource theories in which extensive-variable operators other than $H$ and $N$ commute with each other~\cite{YungerHalpern14}. Before proceeding to Proposition~\ref{prop:equimajorization}, we establish Lemma~\ref{lem:Lorenzhypotest} for convenience.

%
%
\begin{lemma*}[\ref{lem:Lorenzhypotest}]
The inverse of $\epsilon \mapsto b_\epsilon(r||g_R)$ is the piecewise linear function that connects the points $(L_R (t_k),1-t_k)$, wherein $t_k$ and $L_R (t_k)$ define the rescaled Lorenz curve for $R$:
\begin{align}
   (L_R (t_k),1-t_k)
   =   ( b_\epsilon(r||g_R),   \epsilon ).
\end{align}
\end{lemma*}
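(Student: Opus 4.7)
The plan is to reduce the semidefinite program defining $b_\epsilon$ in Eq.~\eqref{eq:opttypeIIerror} to a classical linear program, solve it explicitly using a Neyman--Pearson threshold test, and read off the vertices of the resulting piecewise linear curve.

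First I would use the quasiclassical structure: since $\rho$ and $\gamma_R$ are simultaneously diagonal in the eigenbasis of $(H,N)$, the optimal measurement operator $Q$ in~\eqref{eq:opttypeIIerror} can be taken diagonal in the same basis---pinching any candidate $Q$ to this basis preserves $\tr[Q\rho]$ and $\tr[Q\gamma_R]$ as well as the constraint $0\le Q\le \id$, so no generality is lost. Writing $q_i := \langle i|Q|i\rangle$, the program reduces to the linear program
\begin{equation}
    b_\epsilon(r \| g_R) \;=\; \min\Bigl\{\, \textstyle\sum_i q_i g_i \;:\; \sum_i q_i r_i \ge 1-\epsilon,\; 0 \le q_i \le 1 \,\Bigr\}.
\end{equation}

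Next I would identify the optimum via a Neyman--Pearson threshold test. Let $\pi$ be the permutation from Eq.~\eqref{eq:lorenzpoints} that orders the likelihood ratios $r_{\pi(j)}/g_{\pi(j)}$ non-increasingly. At the corner value $\epsilon_k := 1 - L_R(t_k)$, the choice $q_{\pi(j)} = 1$ for $j \le k$ and $q_{\pi(j)} = 0$ for $j > k$ is feasible and yields $\sum_i q_i g_i = t_k$. To certify optimality I would invoke the dual form~\eqref{eq:htdual}: taking $\mu^\star := g_{\pi(k)}/r_{\pi(k)}$ (the reciprocal of the threshold ratio) and $\tau^\star := \sum_{j \le k}\bigl(\mu^\star r_{\pi(j)} - g_{\pi(j)}\bigr)\,\ketbra{\pi(j)}{\pi(j)}$ gives a dual-feasible pair (nonnegativity of $\tau^\star$ and the operator inequality $\mu^\star \rho - \gamma_R \le \tau^\star$ both follow from the likelihood-ratio ordering) whose objective $(1-\epsilon_k)\mu^\star - \tr[\tau^\star] = \mu^\star L_R(t_k) - \mu^\star L_R(t_k) + t_k = t_k$ matches the primal value. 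Weak duality then pins down $b_{\epsilon_k}(r \| g_R) = t_k$, which is precisely the identity $(b_\epsilon, \epsilon) = (t_k, 1 - L_R(t_k))$ asserted in the lemma.

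For an $\epsilon$ strictly between consecutive corner values $\epsilon_{k+1} < \epsilon < \epsilon_k$, I would retain the same threshold structure but introduce a single fractional coordinate $q_{\pi(k+1)} = \alpha \in (0,1)$ tuned to saturate $\sum_i q_i r_i = 1-\epsilon$. Because both the constraint and the objective are affine in $\alpha$, the map $\epsilon \mapsto b_\epsilon$ is linear on $[\epsilon_{k+1}, \epsilon_k]$, and the whole graph is therefore the piecewise linear curve joining the corner points $(\epsilon_k, t_k)$. Inverting the coordinates $\epsilon \leftrightarrow b$ produces the piecewise linear function through $(t_k, 1 - L_R(t_k))$, completing the proof.

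The main obstacle is the optimality step at the corners: ruling out exotic non-threshold assignments of the $q_i$. I expect the SDP-duality certificate above to be the cleanest route, since exhibiting an explicit dual-feasible pair sidesteps the case analysis that a direct exchange argument would require. A secondary subtlety is degeneracy in the likelihood-ratio ordering (when several consecutive $r_{\pi(j)}/g_{\pi(j)}$ coincide, or when some $r_{\pi(k)} = 0$); the former merely collapses adjacent linear segments into one, while the latter is handled by passing to a limit of the threshold $\mu^\star$, so neither affects the construction.
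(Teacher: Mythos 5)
Your proposal follows essentially the same route as the paper: reduce to the quasiclassical primal/dual pair \eqref{eq:opttypeIIerror}--\eqref{eq:htdual}, exhibit the threshold (Neyman--Pearson) test as a feasible primal point at each Lorenz corner, and certify optimality there with an explicit dual-feasible pair $(\mu^\star,\tau^\star)$; your corner computation $(1-\epsilon_k)\mu^\star-\tr[\tau^\star]=t_k$ is correct, and your $\mu^\star=g_{\pi(k)}/r_{\pi(k)}$ is just the boundary case of the paper's window $r_{\pi(m+1)}/g_{\pi(m+1)}\leq 1/\mu_m<r_{\pi(m)}/g_{\pi(m)}$. The one step that does not follow as written is the interior of each segment: the family of fractional threshold tests $q_{\pi(k+1)}=\alpha$ being affine in $\alpha$ only shows that the \emph{value achieved by that family} interpolates linearly, i.e.\ it gives $b_\epsilon(r\|g_R)\leq$ the chord; to conclude equality you still need a matching lower bound, since convexity plus the corner values alone does not pin down $b_\epsilon$ between corners. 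The fix is already in your hands: the dual pair you would use at corner $k+1$, namely $\mu=g_{\pi(k+1)}/r_{\pi(k+1)}$ with $\tau=\sum_{j\leq k}(\mu r_{\pi(j)}-g_{\pi(j)})\,e_je_j^T$, is feasible independently of $\epsilon$, and evaluating the dual objective at an interior $\epsilon$ with $1-\epsilon=R_k+\alpha r_{\pi(k+1)}$ gives exactly $\alpha g_{\pi(k+1)}+G_k$, the chord value. This is precisely how the paper handles the interior (it pairs the convexity of $\epsilon\mapsto b_\epsilon$ for the upper bound with such a dual point for the lower bound), so with that one sentence added your argument matches the paper's proof; the degeneracy caveats you mention (equal likelihood ratios, $r_{\pi(k)}=0$) are handled the same way in both.
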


\begin{proof}
Let $\pi$ denote a permutation such that the sequence $(r_{\pi(k)}/g_{\pi(k)})_k$ is nonincreasing.  
Let $R_m   :=   \sum_{k=1}^m r_{\pi(k)}$ and $G_m   :=   \sum_{k=1}^m g_{\pi(k)}$ for all $m  \in  \{1,2,\dots, d_R\}$, wherein $d_R$ denotes the number of elements in $r$.
For $m=0$, define $R_0=G_0=0$. 
The points that define the rescaled Lorenz curve are  $(G_m,R_m)$ for $m\in \{0,1,\dots, d_R\}$.
To prove the claim, we first show that $( G_m,  1 - R_m )$ equals the 
$( b_\eps ( r || g_R ),   \eps )$ associated with an optimal hypothesis test for each $m$. 
Then, we show that optimal tests interpolate linearly between the points. 

We begin with $m=0$. The optimal test for $\eps = 1$ is  $Q=0$; thus, $b_{1}=0$.
Hence \mbox{$( b_1,  1)  =  (0, 1)  =  ( G_0,  1 - R_0 )$.}
Now, consider the hypothesis test for an $\eps_m  =  1 - R_m$ for which $m\neq 0$. Define $Q_m$ as a $d_R \times d_R$ matrix that projects onto the $m$ values of $k$ for which $r_{ \pi (k) } / g_{ \pi (k) }$ is greatest. Operation by $Q_m$ on a vector $v$ preserves the part of the support of $v$ that lies on these $m$ values of $k$ and maps all other elements of $v$ to zero.
As $ \sum_i [Q_m  r ]_i  =  R_m$, $Q_m$ is a feasible measurement element in the primal definition of 
$b_{\eps_m}$ [Eq.~\eqref{eq:opttypeIIerror}].\footnote{
$Q$ is said to be \emph{feasible} if the measurement $\{Q, \id - Q\}$ corresponds to a Type I error probability of at most $\epsilon$. The feasible measurement that minimizes the Type II error probability is \emph{optimal}.}
Therefore,
\begin{equation}   \label{eq:BetaFirstSide}
   b_{\eps_m}(r   ||   g_R )
   \leq   \sum_i [Q_m  g_R ]_i
   =   G_m.
\end{equation}   

To show that equality holds, we consider the dual problem in Eq.~\eqref{eq:htdual}. A feasible pair\footnote{
$(\mu, \tau)$ is said to be \emph{feasible} if it satisfies the constraints in Eq.~\eqref{eq:htdual}.}
$(\mu_m,\tau_m)$ that satisfies the constraint $\mu_m r - g_R  \leq  \tau_m$ is given by defining $\mu_m$ such that $r_{\pi(m+1)}/g_{\pi(m+1)}\leq 1/\mu_m<r_{\pi(m)}/g_{\pi(m)}$ and  
$\tau_m=\sum_{k=1}^m \left(\mu_m r_{\pi(k)}-g_{\pi(k)}\right)  e_k  e_k^T$,
wherein $e_k$ denotes the unit vector that has exactly one nonzero element, which corresponds to the $[\pi(k)]^{\rm th}$ energy--and--particle-number level, and the superscript $T$ denotes the transpose. 
Evaluating Eq.~\eqref{eq:htdual} shows that the two contributions dependent on $\mu_m$ cancel, by $1 - \eps_m  =  R_m$ and by the definition of $R_m$. Hence $b_{\eps_m}(r   ||   g)\geq G_m$. 
Combining this result with Ineq.~(\ref{eq:BetaFirstSide}), shows that 
\begin{align}   \label{eq:Interp0}
   b_{\eps_m}(r   ||   g_R )=G_m.
\end{align}


Now, consider a Type I error for which $\eps_m \geq \eps  \geq \eps_{m+1}$.  
Set $\lambda  \in  (0, 1)$ such that 
\begin{equation}
   1  -  \eps  =  (1-\lambda)  (1  -  \eps_m)  +  \lambda  (1 - \epsilon_{m+1}).
\end{equation}
Note that $(1  -  \eps)  =  (1  -  \eps_m)  +  \lambda r_{\pi(m+1)}$. Since 
$\eps  \mapsto b_\eps( r   ||   g_R)$ is convex, 
\begin{align}   \label{eq:Interp}
   b_{\eps}( r   ||   g_R)
   \leq (1-\lambda) b_{\eps_m} ( r   ||   g_R )   
   +   \lambda b_{\eps_{m+1}} ( r   ||   g_R ).
\end{align}

Let us show that Ineq.~\eqref{eq:Interp} holds if the inequality is reversed.
In the dual problem, if $\mu=g_{\pi(m+1)}/r_{\pi(m+1)}$ and 
$\tau   =   \sum_{k=1}^{m} \left(\mu r_{\pi(k)}-g_{\pi(k)}\right)
   e_k  e_k^T$, then
\begin{align}
   b_\eps( r   ||   g_R )   &   \geq 
   \mu  [  (1  -  \eps_m)  +\lambda r_{\pi(m+1)}  ] - \mu R_m+G_m\\
   &=\lambda g_{\pi(m+1)}+G_m\\
   &=(1-\lambda)G_m+\lambda G_{m+1}\\
   &=(1-\lambda)b_{\eps_m} ( r   ||   g_R )   +   \lambda b_{\eps_{m+1}} ( r   ||   g_R ).
         \label{eq:Interp2}
\end{align}
The final equality follows from Eq.~\eqref{eq:Interp0}. Inequalities~\eqref{eq:Interp} and~\eqref{eq:Interp2} show that interpolating linearly between 
$(G_m,  1 - R_m)$ and $(G_{m+1},  1 - R_{m+1})$ amounts to interpolating linearly between 
$( b_{\eps_m}(r || g_R),  \eps_m)$ and $( b_{\eps_{m+1}}(r || g_R),  \eps_{m+1})$.
\end{proof}

Finally, we give a mostly self-contained proof of Proposition~\ref{prop:equimajorization}.

%
%
%
%
\begin{proposition*}[\ref{prop:equimajorization}]
For any states $R$ and $S$ in the grand-potential resource theory defined by $(\beta, \mu)$, the following are equivalent:
\begin{enumerate}[(a)]
\item $R\eqmaj  S$.
\item $L_R(t)\geq L_S(t)$ for all $t\in [0,1]$.
\item $\phi_{f_a}(R)\geq \phi_{f_a}(S)$ for  every function 
$f_a(t)   =   \max  \{0,   t-a \}$ and every $a \in \mathbbm{R}$.
\item $\phi_f(R)\geq \phi_f(S)$ for all continuous convex functions $f$.
\end{enumerate}
\end{proposition*}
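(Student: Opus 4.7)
The plan is to prove the cycle (a)$\Rightarrow$(d)$\Rightarrow$(c)$\Rightarrow$(b)$\Rightarrow$(a), closing the loop by exploiting convex duality between the rescaled Lorenz curve and the family of ``hockey-stick'' functions $f_a(t) = \max\{0, t-a\}$.

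Step (a)$\Rightarrow$(d) is the standard monotonicity of $f$-divergences under stochastic maps that preserve the reference measure. Given an $M$ from Definition~\ref{definition:EquiMajor}, the weights $w_{ij} := M_{ij}\, g_{R,j}/g_{S,i}$ sum to $1$ over $j$ (since $Mg_R = g_S$), so $s_i/g_{S,i}$ is a convex combination of the ratios $r_j/g_{R,j}$. Applying Jensen's inequality inside each term $g_{S,i}\, f(s_i/g_{S,i})$ and summing over $i$, using the column-stochastic condition $\sum_i M_{ij} = 1$, yields $\phi_f(S) \leq \phi_f(R)$. The implication (d)$\Rightarrow$(c) is immediate because each $f_a$ is continuous and convex.

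The equivalence (c)$\Leftrightarrow$(b) rests on the identity $\phi_{f_a}(R) = \sup_{t\in[0,1]} (L_R(t)-at)$. To check it, fix the ordering $\pi$ used to define $L_R$ and observe that $\sum_i \max\{0, r_i - a g_i\}$ telescopes to $R_m - a G_m$ for the unique $m$ satisfying $r_{\pi(m)}/g_{\pi(m)} \geq a \geq r_{\pi(m+1)}/g_{\pi(m+1)}$; that vertex of the piecewise-linear concave $L_R$ is precisely where $L_R(t)-at$ is maximized. Thus $a \mapsto \phi_{f_a}(R)$ is the concave Legendre transform of $L_R$, and because $L_R$ is concave and upper semicontinuous the double transform recovers $L_R$. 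Since the Legendre transform reverses pointwise ordering in both directions, the pointwise inequality $L_R \geq L_S$ on $[0,1]$ and the pointwise inequality $\phi_{f_a}(R) \geq \phi_{f_a}(S)$ over all $a \in \mathbbm{R}$ are equivalent.

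The remaining implication (b)$\Rightarrow$(a) is the main obstacle --- a Hardy--Littlewood--P\'olya-type construction in the $d$-majorization setting. Following Uhlmann, I would build $M$ inductively from elementary two-level equilibrating transfers. At each step, order the levels by the decreasing ratio $\pi$, identify the first vertex at which $L_R$ strictly exceeds $L_S$, and apply a stochastic operation supported on only two indices that preserves $g_R$ and lowers that vertex of $L_R$ exactly onto $L_S$. A bookkeeping argument should show that the number of disagreement vertices strictly decreases, so the procedure terminates for discrete alphabets and the composition of the transfers yields the required $M$. The delicate points are verifying that each elementary transfer remains stochastic, preserves $g_R$, and does not worsen dominance elsewhere along the curve; this is where I expect the proof to require the most care.
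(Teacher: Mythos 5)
Your first three implications are correct and take a genuinely different route from the paper: you obtain (a)$\Rightarrow$(d) directly from Jensen's inequality for stochastic maps that carry $g_R$ to $g_S$, and you link (c) and (b) by noting that $\phi_{f_a}(R)=\sup_{t\in[0,1]}\{L_R(t)-at\}$ is the concave conjugate of the Lorenz curve, so pointwise dominance transfers both ways by biconjugation (the conjugate is order-\emph{preserving}, not order-reversing, but your conclusion is unaffected). The paper instead proves (a)$\Rightarrow$(b) by composing an optimal hypothesis test with the stochastic map $M$ (via Lemma~\ref{lem:Lorenzhypotest}) and (b)$\Rightarrow$(c) through the semidefinite dual of $b_\eps$; your version is more elementary and avoids the hypothesis-testing machinery altogether.

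The gap is the closing step (b)$\Rightarrow$(a), which is only sketched and, as described, would fail. First, your construction presupposes that $R$ and $S$ share one state space and one equilibrium vector (``two-level transfers preserving $g_R$''), whereas Definition~\ref{definition:EquiMajor} requires a rectangular $d_S\times d_R$ matrix with $Mg_R=g_S$ and, in general, $g_R\neq g_S$ and $d_R\neq d_S$; elementary moves that fix $g_R$ never change the reference vector or the dimension, so an embedding/level-splitting argument would be needed even to set up the induction. Second, and more fundamentally, the Hardy--Littlewood--P\'olya $T$-transform strategy does not survive the passage from majorization to $d$-majorization: a stochastic map acting nontrivially on only two levels and fixing $g$ is necessarily a convex combination of the identity and the ``$\beta$-swap'' on that pair, and compositions of such two-level Gibbs-preserving maps are known to reach strictly fewer states than general Gibbs-stochastic matrices once $d\geq 3$ (this is the content of later work on ``elementary thermal operations''). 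Hence there exist pairs with $L_R(t)\geq L_S(t)$ for all $t$ for which no finite sequence of your elementary transfers maps $r$ to $s$, so the ``number of disagreement vertices strictly decreases'' bookkeeping cannot be made to work in general. The paper closes the cycle nonconstructively instead, proving (d)$\Rightarrow$(a) by contraposition with a separating-hyperplane argument following Alberti--Uhlmann: if no admissible $M$ exists, the convex compact set $\{Mr\oplus Mg_R\}$ is separated from $s\oplus g_S$ by some $x\oplus y$, and the convex function $f(u)=\max_j\{x_j u+y_j\}$ then violates (d). Replacing your induction by such a separation (or a Blackwell/Strassen-type) argument is the natural fix, and it meshes with the implications you have already established.
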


\begin{proof}
We will show that $(a)\Rightarrow (b)\Rightarrow (c)\Rightarrow (d)\Rightarrow (a)$.
$R$, $S$, $G_R$, and $G_S$ are defined as in the proof of Theorem~\ref{theorem:FreeOpMajor}.
\begin{itemize}

\item[$(a)\Rightarrow (b)$] 
Let $M$ denote the stochastic matrix from the equimajorization condition. Let $Q$ define the optimal test that distinguishes $s$ from $g_S$ with a Type I error probability of at most $\epsilon$. To distinguish $r$ from $g_R$, one can apply $M$ and then measure $\{ Q,  \id  -  Q \}$. 
This test might distinguish between $r$ and $g_R$ suboptimally.
Hence $b_\eps(r  ||  g_R)\leq  b_\eps(s  ||  g_S)$. Lemma~\ref{lem:Lorenzhypotest} implies (b).

\item[$(b)\Rightarrow (c)$] 
The dual formulation of $\eps\mapsto b_\epsilon(r  ||  g_R)$ can be written as 
\begin{align}
   b_\epsilon(r  ||  g_R)=\max_\mu \Big\{   (1-\eps)\mu  -
   \sum_i [\{\mu r-g_R\}_+]_i   \Big\}.
\end{align}
Hence 
$b_{1-\eps}(r  ||  g_R)$  is the Legendre transform of
\begin{equation}
   \sum_i   [\{\mu r-g_R\}_+]_i
   =  \mu  \sum_i g_i f_{1/\mu} \left(  \frac{ r_i }{ g_i }  \right)
   =  \mu \:  \phi_{f_{1/\mu}}(r,g_R), 
\end{equation}
wherein $f_a(t)   =   \max  \{0,   t-a \}$. Since $b_\eps(r  ||  g_R)\leq  b_\eps(s  ||  g_S)$, 
$\phi_{f_{1/\mu}}(r,g_R) \geq \phi_{f_{1/\mu}} (s,g_S)$. 

\item[$(c)\Rightarrow (d)$] 
In~\cite{uhlmann_ordnungsstrukturen_1978} (see also ~\cite[Lemma 1.2.5]{alberti_dissipative_1981}), Uhlmann shows that every continuous convex function $f(x)$ can be approximated to arbitrary accuracy by a linear combination, that has positive coefficients, of functions $f_a(x)$. [In Uhlmann's phrasing, a concave $f(x)$ is approximated by positive linear combinations of $-f_a(x)$]. Because  $\phi_{f_a}(R)  \geq \phi_{f_a} (S)$ for all $f_a$, $\phi_f(R)\geq \phi_f(S)$.

\item[$(d)\Rightarrow (a)$] 
Following~\cite[Theorem 1.4.4]{alberti_dissipative_1981}, we prove the contrapositive. Assume that no stochastic matrix $M$ satisfies $Mr=s$ and $Mg_R=g_S$. Since the set of stochastic matrices is convex and compact (all entries being in the unit interval), the set of vectors $Mr\oplus Mg_R$ is convex and compact. As this set does not contain $s\oplus g_S$, a hyperplane separates $s\oplus g_S$ from $\{Mr\oplus Mg_R\}$~\cite[Theorem 3.5]{tiel_convex_1984}. That is, a vector $x\oplus y$ satisfies $x\cdot s+y\cdot g_S > x\cdot Mr + y\cdot Mg_R$ for all stochastic matrices $M$. Taking the maximum over $M$ on the right-hand side (RHS) and denoting by $[g_R]_k$ the $k^{\rm th}$ element of $g_R$ gives
\begin{align}
x\cdot s+y\cdot g_S &
> \max_M\,   \left\{  x\cdot Mr + y\cdot M  g_R   \right\} \\
&=\max_M \sum_{jk}   \left(  x_jM_{jk}r_k+y_jM_{jk}  [g_R]_k  \right)  \\
&=\sum_k \max_j   \left\{   x_jr_k+y_j  [g_R]_k    \right\} . 
\end{align}
Maximizing on the LHS produces
\begin{align}
\sum_k \max_j   \left\{  x_js_k+y_j  [g_S]_k    \right\} 
   >\sum_k \max_j   \left\{  x_jr_k+y_j  [g_R]_k   \right\} .
\end{align}
But $f(s,t)=\max_j  \left\{  x_js+y_jt  \right\} $ is a convex function, so 
$\phi_f(S)\geq \phi_f(R)$. The contrapositive implies $(a)$. 
\end{itemize}
\end{proof}

%
%
%
%
\section{One-shot work-yield and work-cost proofs} \label{section:FaultyWork}

Equilibrating operations can extract work from one copy of a state $R=(r, H, N)$ and can store the work in a battery. From enough stored work, equilibrating operations can generate $R$. Each protocol can have a probability $\epsilon \in [0, 1]$ of failing to accomplish its purpose. We calculate the maximum work 
$\Wext (R)$ extractable from, and bound the least work 
$\Wform$ required to create, $R$ with error-prone protocols. 
First, we prove a helpful lemma. Anticipating applications of the lemma, we use the notation $G$ and $G'$ associated with free states. However, the lemma holds if $G$ denotes an arbitrary quantum state and $G'$ denotes an arbitrary quasiclassical state.

We use the following notation: By $e_E$, we denote a vector of the eigenvalues of the pure state $\ketbra{E}{E}$. The element associated with energy $E$ is one, and the other elements are zeroes.
By $[ v ]_i$, we denote the $i^{\rm th}$ element of any vector $v$.
If $r$ and $s$ denote equal-sized vectors, their scalar product is 
$r \cdot s  :=  \sum_i r_i s_i$.

%
%
\begin{lemma}
\label{lem:hypoconvert}
Let $R=(r,H,N)$ denote any state;
and \mbox{$( e_E,   H',  N' )$}, any pure state.
Let $G=(g,  H,  N )$ and $G'=(g',   H',  N')$. 
The optimal hypothesis test between $r$ and $g$ is related to the optimal test between 
$r  \otimes   e_E$ and $g  \otimes  g'$ by
\begin{align}
   b_\eps(r   \otimes   e_E    ||   g   \otimes   g') 
   &= (g' \cdot e_E)   \,b_\eps(r   ||   g). \label{eq:hypoconvert}
\end{align}
\end{lemma}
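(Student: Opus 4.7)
The plan is to work directly with the quasiclassical primal form of the semidefinite program in Eq.~\eqref{eq:opttypeIIerror}. Since every state involved is diagonal in the shared eigenbasis, the optimal measurement operator $Q$ on the composite Hilbert space $\mathcal H_R\otimes \mathcal H'$ may be taken diagonal, so it is represented by a matrix of real numbers $Q_{ij}\in[0,1]$, where $i$ indexes the eigenbasis of $R$ and $j$ indexes that of the pure ancilla system. Traces turn into sums, and the program reads
\begin{equation}
  b_\eps(r\otimes e_E\,||\,g\otimes g')
  =\min_{0\le Q_{ij}\le 1}\;\Bigl\{\sum_{ij}Q_{ij}\,g_i\,g'_j\;:\;\sum_{ij}Q_{ij}\,r_i\,(e_E)_j\ge 1-\eps\Bigr\}.
\end{equation}

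Next, I would exploit the sparsity of $e_E$. Because $(e_E)_j=\delta_{j,E}$, the constraint depends only on the column $\{Q_{iE}\}_i$, reducing to $\sum_i Q_{iE}\,r_i\ge 1-\eps$. All other entries $Q_{ij}$ with $j\ne E$ are completely unconstrained by the feasibility condition but contribute a nonnegative amount $Q_{ij}\,g_i\,g'_j\ge 0$ to the objective. Optimality therefore forces $Q_{ij}=0$ whenever $j\ne E$, and the program collapses to one over a single column $\tilde Q_i:=Q_{iE}$.

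Substituting back, the objective becomes $g'_E\sum_i\tilde Q_i\,g_i=(g'\cdot e_E)\sum_i\tilde Q_i\,g_i$, subject to the constraint $\sum_i\tilde Q_i r_i\ge 1-\eps$. Pulling the constant factor $(g'\cdot e_E)$ out of the minimization, what remains is exactly the primal program defining $b_\eps(r\,||\,g)$, so
\begin{equation}
  b_\eps(r\otimes e_E\,||\,g\otimes g')=(g'\cdot e_E)\,b_\eps(r\,||\,g),
\end{equation}
which is Eq.~\eqref{eq:hypoconvert}.

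I do not anticipate any real obstacle: the only subtle point is justifying that restricting to diagonal $Q$ is without loss of optimality, which follows because both $r\otimes e_E$ and $g\otimes g'$ are diagonal in a common basis, so pinching $Q$ onto that basis can only decrease $\tr[Q\,g\otimes g']$ while preserving $\tr[Q\,r\otimes e_E]$. Once that reduction is in hand, the rest is the elementary observation that unconstrained nonnegative variables in a minimization are set to zero.
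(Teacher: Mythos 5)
Your proof is correct and follows essentially the same route as the paper's: both work in the primal program, reduce the measurement to its components along the ancilla's eigenbasis, observe that the constraint involves only the $E$ component so optimality sets the others to zero, and then factor out $g'\cdot e_E$ to recover $b_\eps(r\,||\,g)$. The only nitpick is that pinching onto the common eigenbasis in fact \emph{preserves} both $\tr[Q(r\otimes e_E)]$ and $\tr[Q(g\otimes g')]$ (since both states are diagonal), rather than merely not increasing the latter, which makes your without-loss-of-generality step even more immediate.
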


\begin{proof}
Consider any feasible measurement operator $Q$ in 
$b_\eps(r   \otimes   e_E   ||   g    \otimes   g')$. We can find a feasible $Q'$ that gives the same value of the objective function and that has the form 
\mbox{$Q'=\sum_{i} Q_{E'_i}   \otimes  e_{E'_i}  e_{E'_i}^T$}. Here,
\mbox{$Q_{E'_i}  =  (\id   \otimes   e_{E'_i})  Q  (\id   \otimes   e_{E'_i}^T  )$}, and the superscript $T$ denotes the transpose. Without loss of generality, we focus on $Q'$ operators that have this form. 

The constraint becomes $\sum_i  [Q_E \: r]_i  \geq 1  -  \eps$, and the objective function becomes
\begin{align}
   \sum_i   [ Q'   (g   \otimes g') ]_i   
   &=   \sum_{i}  \left\{   (g'  \cdot   e_{ E'_i })
                                    \sum_j   [Q_{E'_i}  \:   g]_j   \right\}.
\end{align}
The minimum follows from setting $Q_{E'_i}=0$ for all $E'_i   \neq E$:
\begin{align}
   b_\eps  (r   \otimes   e_E    ||   g   \otimes   g') 
   &=   \min \{  (g'  \cdot  e_E)  \sum_j  [ Q_E  \:   g ]_j  
                         \:  \Big|  \:
                       \sum_j [Q_E  \:   r ]_j   \geq   1  -  \eps,      
                       0   \leq   Q_E   \leq   \id   \}   \\
   &=   (g'  \cdot  e_E)   \,   b_\eps (r   ||   g).
\end{align}
\end{proof}

This lemma implies another lemma, associated with $\eps=0$, that will facilitate our work-bound proofs.

\begin{lemma}
Let $R$ denote any state in a grand-potential resource theory defined by $\beta$ and $\mu$. Let $W$ denote the work extractable from $R$, and let $W'$ denote the work cost of creating $R$, with error tolerance $\eps = 0$. There exist batteries $B$ such that
\begin{align}
   R  \comp   {B}_E       \succ_{\beta, \mu}       {B}_{E+W}
   \quad &\Leftrightarrow \quad 
   R   \succ_{\beta, \mu}     {B}_W
   \quad {\rm and} \label{eq:spc1}\\
   {B}_{E+W'}    \succ_{\beta, \mu}     R  \comp    {B}_{E}
   \quad &\Leftrightarrow \quad 
   {B}_{W'} \succ_{\beta, \mu} R.\label{eq:spc2}
\end{align}
\end{lemma}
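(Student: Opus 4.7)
The plan is to prove each equivalence by translating the quasiorder $\succ_{\beta,\mu}$ into an inequality on the hypothesis-testing probability $b_\eps$, and then factoring out the pure battery subsystem via Lemma~\ref{lem:hypoconvert}. I focus on Eq.~(\ref{eq:spc1}); Eq.~(\ref{eq:spc2}) follows by the same steps with the inequality reversed.

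First I would combine Theorem~\ref{theorem:FreeOpMajor}, Proposition~\ref{prop:equimajorization}, and Lemma~\ref{lem:Lorenzhypotest} to conclude that $A\succ_{\beta,\mu}A'$ holds exactly when $b_\eps(a\,\|\,g_A)\le b_\eps(a'\,\|\,g_{A'})$ for every $\eps\in[0,1]$. The reason is that the map $\eps\mapsto b_\eps(a\,\|\,g_A)$ is (by Lemma~\ref{lem:Lorenzhypotest}) the reflected inverse of the Lorenz curve $L_A$, so that $L_A\ge L_{A'}$ pointwise is equivalent to pointwise domination of $b_\eps$; and Proposition~\ref{prop:equimajorization} identifies Lorenz-curve domination with equimajorization. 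Applying this to both sides of (\ref{eq:spc1}) reduces the equivalence to matching two families of inequalities indexed by $\eps$.

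Second, I would apply Lemma~\ref{lem:hypoconvert} twice: once with a nontrivial first factor, yielding $b_\eps(r\otimes e_E\,\|\,g_R\otimes g_B)=g_{B,E}\,b_\eps(r\,\|\,g_R)$; and once with a trivial first factor, specializing to $b_\eps(e_{E+W}\,\|\,g_B)=g_{B,E+W}(1-\eps)$ and $b_\eps(e_W\,\|\,g_B)=g_{B,W}(1-\eps)$. Substituting these into the translated majorization conditions produces the inequalities $b_\eps(r\,\|\,g_R)\le(g_{B,E+W}/g_{B,E})(1-\eps)$ and $b_\eps(r\,\|\,g_R)\le g_{B,W}(1-\eps)$, required to hold for all $\eps\in[0,1]$. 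These coincide precisely when $g_{B,E+W}/g_{B,E}=g_{B,W}$, a translation-invariance condition on the battery's grand canonical weights.

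The main obstacle is justifying this last identity, since for a generic grand canonical battery one has $g_{B,E+W}/g_{B,E}=e^{-\beta W}$ while $g_{B,W}=e^{-\beta W}/Z_B$, and so the identity fails by the factor $Z_B$. I would discharge this by exploiting the existential quantifier ``there exist batteries $B$'' in the lemma statement: select a battery whose ground-state energy is zero and whose remaining levels satisfy $\beta E_j\gg 1$, so that the partition sum is dominated by the ground-state weight of unity and $Z_B\to 1$. This is precisely the regime enforced by the paper's standing assumption $\beta E\gg 1$ on the battery levels used to store work. Eq.~(\ref{eq:spc2}) then follows by repeating the argument with the inequality on $b_\eps$ reversed, so that $W'$ and the formation condition play the roles of $W$ and the extraction condition.
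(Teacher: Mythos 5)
Your reduction of both sides of \eqref{eq:spc1} to hypothesis-testing inequalities via Theorem~\ref{theorem:FreeOpMajor}, Proposition~\ref{prop:equimajorization}, Lemma~\ref{lem:Lorenzhypotest}, and Lemma~\ref{lem:hypoconvert} is exactly the right toolkit (and your direction convention $A\eqmaj A' \Leftrightarrow b_\eps(a||g_A)\le b_\eps(a'||g_{A'})$ is the correct one). The gap is in the final step. The lemma asserts an exact equivalence, but your two conditions, $b_\eps(r||g_R)\le (g_{B,E+W}/g_{B,E})(1-\eps)$ and $b_\eps(r||g_R)\le g_{B,W}(1-\eps)$, differ by the multiplicative factor $Z_B$, and no admissible battery makes $Z_B=1$ exactly: any battery that actually possesses levels at $W$, $E$, and $E+W$ has $Z_B$ containing at least the term $e^{-\beta W}$ (plus a ground-state term if a level sits at zero). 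Pushing the \emph{other} levels up with $\beta E_j\gg1$ cannot remove this, because $W$ is fixed by the resource $R$ and may be of order $1/\beta$; so $Z_B$ stays bounded away from $1$ (roughly $1+e^{-\beta W}$, or $e^{-\beta W}$ if there is no zero level). The standing assumption $\beta E\gg1$ concerns the charged level relative to the battery's own thermal occupation, not the normalization of $g_B$, and an approximate cancellation in any case cannot deliver the stated ``if and only if'': with $Z_B>1$ the formation-side condition strictly implies the other but not conversely.

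The paper's construction avoids the issue by using the existential quantifier differently: the battery is taken to be a \emph{composite} of two noninteracting subsystems, with $B_E$ identified as subsystem~1 occupying level $E$ (equilibrium vector $g'$), $B_{E+W}$ as the joint eigenstate $e_E\otimes e_W$, and $B_W$ in the target relation as subsystem~2 alone occupying level $W$ (equilibrium $g^{(2)}$). Then the \emph{same} pure factor $e_E$, with the \emph{same} equilibrium factor $g'$, appears on both sides of \eqref{eq:spc1}; applying Lemma~\ref{lem:hypoconvert} to both sides produces the common factor $g'\cdot e_E$, which cancels identically, leaving $b_\eps(r||g_R)$ versus $b_\eps(e_W||g^{(2)})$ for all $\eps$, i.e.\ precisely $R\eqmaj B_W$ by Lemma~\ref{lem:Lorenzhypotest} and Proposition~\ref{prop:equimajorization}. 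No condition on any partition function arises, and \eqref{eq:spc2} follows by the same cancellation with the roles of the two sides exchanged. To repair your proof, replace the single-battery reading and the $Z_B\to1$ limit with this two-part battery so that the battery weight factors out of \emph{both} hypothesis-testing probabilities rather than only one.
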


\begin{proof}
Consider a battery $B_{E+W}$ that consists of two noninteracting parts (e.g., two batteries, $B_E$ and $B_W$). The total Hamiltonian is the sum of the subsystems' Hamiltonians, and the total number operator is a sum. 
Suppose that the first Hamiltonian has $d_1$ eigenvalues and the second has $d_2$.
The joint-system state vector \mbox{$ e_E  \otimes  e_W$} is an 
energy-$(E+W)$ eigenstate. The joint system's equilibrium state is the composition of the constituent systems' equilibrium states, whose state vectors we denote by $g$ and $g'$. 
That is, ${B}_{E+W}   =   {B}_E   \comp   {B}_W$. 

Applying several results, we can prove the equivalences in~\eqref{eq:spc1}:
\begin{align}
   R + B_E  \eqmaj  B_W + B_E   \quad 
   & \Leftrightarrow \quad
   L_{R + B_E}(t)  \geq  L_{B_W  +  B_E}(t)  \quad \forall t \in [0, 1] \\
   &  \Leftrightarrow \quad
   b_\eps ( r \otimes e_E  ||  g \otimes g' )
       \geq  b_\eps ( e_W \otimes e_E  ||  g \otimes g')  \\
   &  \Leftrightarrow \quad
   b_\eps (r || g)  \geq  b_\eps ( e_W || g )  \\
   &  \Leftrightarrow \quad
   L_R (t)  \geq  L_{ B_W } (t)  \quad \forall t \in [0, 1] \\
   &  \Leftrightarrow \quad
   R  \eqmaj  B_W.
\end{align}
The first equivalence follows from Proposition~\ref{prop:equimajorization}; the second, from Lemma~\ref{lem:Lorenzhypotest}; the third, from Lemma~\ref{lem:hypoconvert}; the fourth, from Lemma~\ref{lem:Lorenzhypotest}; and the fifth, from Proposition~\ref{prop:equimajorization}. Similar reasoning justifies the equivalence of Eqs.~\eqref{eq:spc2}.
\end{proof}

Having simplified the model of work, we will calculate $\Wext$.

%
%
\begin{theorem*}[\ref{thm:extractMain}]
The $\epsilon$-work value of a state $R = (r, H, N)$ associated with the free state
$G_R = (g_R, H, N)$ is
\begin{align}  \label{eq:WExtApp}
\Wext(R)   =   \tfrac1\beta\Dh^{\eps}(r   ||   g_R).
\end{align}
\end{theorem*}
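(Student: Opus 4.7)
The plan is to reduce the $\eps$-work-value optimization to a pointwise Lorenz-curve comparison, then translate this comparison into the hypothesis-testing relative entropy via Lemma~\ref{lem:Lorenzhypotest}. The proof proceeds by a chain of equivalences that handles achievability and optimality simultaneously; the only substantive content is a careful choice of $\eps$-approximation to the target pure battery state.

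First I would invoke the preceding lemma, extended to the $\eps$-smoothed setting (which follows because the trace distance is stable under tensoring with the fixed pure state $B_E$, so any $\eps$-approximation to $B_{E+W} = B_E \comp B_W$ may be localized to the $B_W$ factor), to replace $R \comp B_E \gorder_\eps B_{E+W}$ by $R \gorder_\eps B_W$. By Theorem~\ref{theorem:FreeOpMajor} and Proposition~\ref{prop:equimajorization}, the latter condition is equivalent to the existence of a probability vector $\tilde p$ on $\mathcal{H}_B$ with $\tfrac12 \|\tilde p - e_W\|_1 \leq \eps$ whose Lorenz curve (with reference $g_B$) is pointwise dominated by $L_R$.

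Next I would characterize the optimal smoothing explicitly. The choice $\tilde p = (1-\eps)\,e_W + \eps\,g_B$ yields a piecewise-linear concave Lorenz curve whose sole nontrivial corner sits at $(g_B(W), (1-\eps) + \eps g_B(W))$ with post-corner slope $\eps$. A short concavity argument then reduces the pointwise domination $L_R(t) \geq L_{\tilde p}(t)$ for all $t$ to the single-point condition $L_R(g_B(W)) \geq (1-\eps) + \eps g_B(W)$; the choice is essentially optimal because every $\eps$-neighbor of $e_W$ has $\tilde p_W \geq 1-\eps$, which forces its Lorenz curve to reach at least $1-\eps$ by $t = g_B(W)$. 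Finally, applying Lemma~\ref{lem:Lorenzhypotest}, which gives $L_R(b_\eps(r||g_R)) = 1-\eps$, together with the monotonicity of $L_R$, the one-point condition becomes $g_B(W) \geq b_\eps(r||g_R)$. In the idealized battery limit $\beta W \gg 1$ (so that $g_B(W) \approx e^{-\beta W}$ and the correction $\eps g_B(W)$ is negligible), this becomes $\beta W \leq \Dh^\eps(r||g_R)$; the supremum is $\tfrac1\beta \Dh^\eps(r||g_R)$.

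The main obstacle I anticipate is handling the battery's partition function cleanly: one must justify that the assumption $\beta E \gg 1$, together with an appropriate choice of battery Hilbert space, makes $\ln Z_B$ and the second-order term $\eps g_B(W)$ negligible so that the theorem's equality is exact rather than merely up to these corrections. A secondary bookkeeping point is the extension of the preceding lemma from the $\eps = 0$ case stated there to arbitrary $\eps$, which works because the pure state $B_E$ factors cleanly out of $B_E \comp \tilde\rho \approx_\eps B_{E+W}$.
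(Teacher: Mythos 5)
Your overall machinery (equimajorization via Theorem~\ref{theorem:FreeOpMajor} and Proposition~\ref{prop:equimajorization}, the single-corner Lorenz analysis of the smoothed battery state, and the translation through Lemma~\ref{lem:Lorenzhypotest}) is sound and in fact mirrors the chain the paper itself uses to prove the $\eps=0$ battery lemma, Eq.~\eqref{eq:spc1}. But your very first step is only half-justified: stability of trace distance under tensoring with the fixed pure state $e_E$ shows that a good $\tilde p$ on $B_W$ yields a protocol for the joint target (the achievability direction), whereas the optimality direction needs the reverse implication $R\comp B_E\gorder_\eps B_{E+W}\Rightarrow R\gorder_\eps B_W$. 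A general $\eps$-approximation of $e_E\otimes e_W$ need not factor across the two sub-batteries, and you cannot condition on the $B_E$ factor with free operations; the paper proves Eq.~\eqref{eq:spc1} only for $\eps=0$. The clean patch is to stay on the joint space and use Lemma~\ref{lem:hypoconvert} to factor the weight $g'\cdot e_E$ out of both the hypothesis-testing quantity and the target's reference weight---which is exactly what the paper's converse does (it turns any assumed channel into a feasible test: apply $\mathcal{E}$ and check whether the battery reads $E+W$).

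The more substantive gap is that your route does not yield the theorem's exact equality, and the correction you flag as an ``obstacle'' does not actually disappear in your battery model. Splitting the battery as $B_E\comp B_W$ gives the target the reference weight $\frac{e^{-\beta E}}{Z_{B_E}}\frac{e^{-\beta W}}{Z_{B_W}}$, so after the $B_E$ factor cancels your necessary condition is $e^{-\beta W}/Z_{B_W}\geq b_\eps(r\,||\,g_R)$, i.e.\ $\beta W\leq \Dh^\eps(r\,||\,g_R)-\ln Z_{B_W}$, and your particular smoothing $\tilde p=(1-\eps)e_W+\eps g_B$ loses a further $O(\eps\, g_B(W))$; for, say, a two-level $B_W$ one has $\ln Z_{B_W}=\ln(1+e^{-\beta W})>0$, so no limit short of pushing the sub-battery's other levels to infinite energy makes the equality exact. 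The paper sidesteps this entirely by taking the initial and final battery states to be two levels, $E$ and $E+W$, of one and the same battery: the two reference weights share a single partition function, their ratio is exactly $e^{-\beta W}$, and Lemma~\ref{lem:hypoconvert} then gives $b_\eps(r\,||\,g_R)\leq e^{-\beta W}$ with no residue. For achievability the paper does not optimize over smoothings at all; it builds the channel explicitly from the optimal test $Q$ (mixing $e_{E+W}$ with the complementary renormalized Gibbs vector $\tilde g'$) and fixes $W$ exactly by requiring the channel to preserve the equilibrium state, which meets the converse bound with equality. So what is missing from your argument is not a finer asymptotic idealization of the battery but the single-battery cancellation (equivalently, exact use of Lemma~\ref{lem:hypoconvert} on both input and output) together with an achievability construction that saturates the bound rather than approaching it up to $\eps\, g_B(W)$.
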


\begin{proof}
In the converse part of the proof, we show that the RHS is an upper bound on the extractable work. In the direct part, we construct an equilibrating operation that attains the bound.

For the converse, define an equilibrating operation by
$\mathcal E(r   \otimes   e_E   )
   \approx_\eps   e_E   \otimes   e_W$.
The channel's output is $\eps$-close, in the $l_1$ norm, to the desired state:
\begin{equation}
   \frac{1}{2} |  \mathcal E(r   \otimes   e_E )   -   e_E   \otimes   e_W |_1   \leq   \eps.
\end{equation}
Since equilibrating operations map equilibrium states to equilibrium states, 
$\mathcal{E}(g_R   \otimes    g')=   g'$.
Using $\mathcal{E}$, we can construct a hypothesis test between $r   \otimes   e_E$ and 
$g_R   \otimes   g'$. The test consists of an application of $\mathcal{E}$ followed by an energy measurement.\footnote{
The proof does not depend on how, or whether, measurements are defined in the resource theory. Because the proof is not a protocol for extracting work, the resource-theory agent need not perform the measurement.}
If the measurement yields $E+W$, we guess that the state is $r   \otimes   e_E$. Otherwise, we guess $g_R   \otimes   g'$. By construction, the probability that we correctly guess 
$r   \otimes   e_E$ is at least $1-\eps$. The test is feasible for 
$\Dh^{\eps}(r   \otimes   e_E    ||   g_R   \otimes   g')$, and
\begin{align}
   e^{-\Dh^{\eps}(r   \otimes   e_E   ||   g_R   \otimes   g')}
   &\leq   \mathcal{E}   (g_R   \otimes   g')  \cdot  e_{E+W}   \\
&=   (g  \otimes  g')  \cdot  e_{E+W}   \\
&=\frac{e^{-\beta (E+W)}}{Z}. \label{eq:costconverse}
\end{align}
By Eq.~\eqref{eq:hypoconvert}, 
\begin{align}
e^{-\Dh^{\eps}(r   ||   g_R)}\leq e^{-\beta W},
\label{eq:basecase}
\end{align}
which is equivalent to the upper bound.

For the proof's direct part, we define the state vector
\begin{align}
   \tilde{g}'   :=   
   \frac1{1-\tfrac 1Ze^{-\beta (E+W)}}
   \left[g'-\tfrac1Ze^{-\beta (E+W)}  e_{E+W}    \right].
\end{align}
Using the optimal measurement  $Q$  in $\Dh^{\eps}(r   \otimes  e_E   ||   g_R   \otimes   g')$, we define the operation $\mathcal E$ by  
\begin{align}
   \mathcal{E}(s  \otimes  u)
   =   \left\{  1-   \sum_i   [Q   (s  \otimes  u) ]_i   \right\}   \tilde{g}'+
         \left\{ \sum_i   [Q   (s  \otimes  u) ]_i   \right\}   e_{E+W}
\end{align}
for state vectors $s$ and $u$. By construction, $\mathcal{E}(r  \otimes   e_E)$ has the form of the desired output.
Since $\mathcal E$ is an equilibrating operation, 
$\mathcal{E}(g_R   \otimes   g')=   g'$. 
This condition determines the possible values of $W$ and is equivalent to
\begin{align}
   \sum_i   [Q (  g_R   \otimes   g'  )]_i
   &=\frac{e^{-\beta (E+W)}}Z.
\end{align}
This equation is equivalent to Ineq.~\eqref{eq:costconverse}, except for containing an equality.
Therefore, free operations can distill at least the work $W$ that satisfies 
\begin{align}
e^{-\Dh^{\eps}(r   ||   g_R )}= e^{-\beta W},
\end{align}
which is equivalent to the lower bound.
\end{proof} 

Having calculated the work extractable from $R$, we bound the work cost of creating $R$ [Ineqs.~\eqref{eq:WForm}].

%
%
\begin{theorem*}[\ref{thm:extractMain}, ctd.]
The work cost of creating an $\epsilon$-approximation to a state $R$ is bounded by 
\begin{align}
 \max_{\delta\in (0,1-\eps]}  \left[
     \tfrac1\beta \Dh^{1 - \eps - \delta}(r   ||   g_R)-\tfrac1\beta\log  \left( \tfrac1\delta \right)   \right]
   \leq \Wform(R)
&\leq \tfrac1\beta\Dh^{1 - \eps}(r   ||   g_R)-\tfrac1\beta\log  \left( \tfrac {1-\eps}{\eps} \right).
\end{align}
\end{theorem*}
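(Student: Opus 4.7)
The plan mirrors the structure of the $\Wext$ portion of Theorem~\ref{thm:extractMain}. By the simplification lemma (Eqs.~\eqref{eq:spc1}--\eqref{eq:spc2}), the task reduces to finding the smallest $W'$ for which some $r' \approx_\eps r$ satisfies $B_{W'} \gorder R'$, and both halves of the inequality then follow from the optimal-hypothesis-test construction combined with Lemma~\ref{lem:hypoconvert} (which governs the action of $b_\eps$ under composition with a pure battery state).

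For the converse (lower bound), take any equilibrating operation $\mathcal{E}$ whose output $\mathcal{E}(e_{E+W'})$ is $\eps$-close to $r \otimes e_E$, and use it to build a hypothesis test between $e_{E+W'}$ and $g'_B$: pre-process with $\mathcal{E}$ and then apply the optimal measurement $Q \otimes e_E e_E^T$ for $\Dh^{1-\delta}(r||g_R)$, where $\delta \in (\eps, 1]$. Equilibrium preservation, $\mathcal{E}(g'_B) = g_R \otimes g'_B$, fixes the Type~II error at $b_{1-\delta}(r||g_R) \cdot e^{-\beta E}/Z_B$; $\eps$-closeness degrades the Type~I success from $\delta$ to at least $\delta - \eps$, making the test feasible for $b_{1-\delta+\eps}(e_{E+W'}||g'_B) = (\delta-\eps) e^{-\beta(E+W')}/Z_B$. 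The $Z_B$ and $e^{-\beta E}$ factors cancel on the two sides exactly as in the $\Wext$ converse, giving $(\delta-\eps) e^{-\beta W'} \leq b_{1-\delta}(r||g_R)$, i.e.\ $\beta W' \geq \Dh^{1-\delta}(r||g_R) + \ln(\delta - \eps)$. Reparametrizing $\delta = \delta' + \eps$ and maximizing over $\delta' \in (0, 1-\eps]$ yields the theorem's lower bound.

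For the direct part (upper bound), let $Q$ be the optimal (Neyman--Pearson) test achieving $\Dh^{1-\eps}(r||g_R)$, with $\tr[Qr] = \eps$, $\tr[Qg_R] = b_{1-\eps}(r||g_R)$, and support on the levels where $r_i/g_{R,i}$ exceeds a threshold $\lambda$ (with fractional weight at the threshold so that $\tr[Qr]=\eps$ exactly). Define the target $r' := (\id - Q)r/(1-\eps)$. The pointwise inequality $|q_i - \eps| \leq q_i(1-\eps) + (1-q_i)\eps$, summed with weight $r_i$, gives $\tfrac12\|r - r'\|_1 \leq \eps$, so $r' \approx_\eps r$. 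The threshold structure forces $\max_i r'_i/g_{R,i} \leq \lambda/(1-\eps)$, and the Neyman--Pearson relation $\lambda\,\tr[Qg_R] \leq \tr[Qr]$ forces $\lambda \leq \eps/b_{1-\eps}(r||g_R)$, hence $\max_i r'_i/g_{R,i} \leq \tfrac{\eps}{1-\eps}\, e^{\Dh^{1-\eps}(r||g_R)}$. By Proposition~\ref{prop:equimajorization}, this max-ratio bound is precisely the Lorenz-curve condition for $B_{W'} \gorder R'$ at $W' = \tfrac1\beta \Dh^{1-\eps}(r||g_R) - \tfrac1\beta \log\!\bigl((1-\eps)/\eps\bigr)$; the corresponding stochastic matrix $M$ has column $W'$ equal to $r'$ and all other columns equal to $(g_R - g'_{B,W'} r')/(1 - g'_{B,W'})$, providing an explicit equilibrating operation realizing the upper bound.

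The hardest step is the trace-distance bound $\tfrac12\|r - r'\|_1 \leq \eps$ when $Q$ has fractional entries at threshold: the naive projective-$Q$ analysis fails, and one must invoke the pointwise convex inequality above. A secondary subtlety, shared with the $\Wext$ proof, is the cancellation of the battery partition function $Z_B$ on both sides of each inequality, which works out only because Lemma~\ref{lem:hypoconvert} applies symmetrically to the input $e_{E+W'}$ on $B$ alone and to the output $r' \otimes e_E$ on $R \otimes B$.
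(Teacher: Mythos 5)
Your proposal is correct and lands on exactly the paper's constants, but at both key steps you work in the primal picture where the paper works in the dual, so the route is genuinely different in execution even though the overall machinery (Lemma~\ref{lem:hypoconvert}, Proposition~\ref{prop:equimajorization}, the battery-reduction lemma) is the same. For the lower bound, the paper takes the optimal dual pair $(\mu,\tau)$ for the battery-alone test, pushes the constraint through $\mathcal{E}$ to get a feasible dual point for $\Dh^{1-\delta}(\mathcal E(e_{E+W})\,\|\,g_R\otimes g')$, and then compares two hypothesis-testing entropies; you instead build a feasible primal test for $e_{E+W'}$ versus $g'$ by pre-processing with $\mathcal{E}$ and measuring $Q\otimes e_E e_E^T$, which is a clean data-processing argument arriving at the same inequality $(\delta-\eps)e^{-\beta W'}\leq e^{-\Dh^{1-\delta}(r||g_R)}$ in one pass. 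For the upper bound, the paper constructs the approximation from the dual optimizer, $\tilde r_k\propto r_k g_k/(g_k+\tau_k)$, and bounds $\max_i \tilde r_i/g_i$ via $\mu\left(\sum_i r'_i\right)\tilde r\leq g_R$; you construct it from the primal Neyman--Pearson optimizer, $r'=(\id-Q)r/(1-\eps)$, with the pointwise inequality giving $\tfrac12\|r-r'\|_1\leq\eps$ and the threshold relation $\lambda\leq \eps/b_{1-\eps}(r||g_R)$ giving the same max-ratio bound $\tfrac{\eps}{1-\eps}e^{\Dh^{1-\eps}(r||g_R)}$ --- an equally valid and arguably more transparent construction, including your explicit stochastic matrix. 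One bookkeeping caveat: when you reduce to $B_{W'}\gorder R'$ and call the max-ratio bound ``precisely'' the Lorenz condition, the battery level's Gibbs weight is $e^{-\beta W'}/Z_B$, not $e^{-\beta W'}$ (your matrix needs $g_R\geq g'_B(W')\,r'$ componentwise). The paper sidesteps this by verifying the $f_a$-divergence ($K$-function) condition for the full transformation $B_{E+W'}\to \tilde R\comp B_E$, in which the same battery's partition function cancels between the levels $E+W'$ and $E$; equivalently, carry the $B_E$ factor through Lemma~\ref{lem:hypoconvert} exactly as you do in your own lower bound. With that bookkeeping only the ratio $e^{-\beta W'}$ of battery Gibbs weights enters, so your chosen $W'$ is unchanged and the stated upper bound follows.
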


\begin{proof}
To derive the lower bound, we suppose that $\mathcal{E}$ is an equilibrating operation that satisfies 
$\mathcal{E}( e_{E + W} )\approx_\eps   r   \otimes  e_E$.
Using $\mathcal{E}$, we can transform the optimal dual program for $e_{E + W}$ and ${g'}$ into a feasible dual program for $\mathcal{E}( e_{E+W} )$ and $g_R   \otimes   g'$. This feasible program can be related to the hypothesis-testing entropy of $r$ relative to $g_R$.  

Consider distinguishing between $e_{E+W}$ and ${g'}$ by hypothesis test. 
Let $1$ denote the state on a one-dimensional space.
By Lemma~\ref{lem:hypoconvert}, 
\begin{align}   
e^{-\Dh^{1 - \eps}(  e_{E+W}   ||   {g'})}
&=   (g'   \cdot   e_{E+W})   {e^{-\Dh^{1 - \eps}(1   ||   1)}}\\
&=   \label{eq:PreDual}   \eps \,(g'  \cdot  e_{E+W}).
\end{align}
The dual formulation of $D^\epsilon_H$ reads,
\begin{align}  \label{eq:DualApp}
e^{-\Dh^{1 - \eps}( e_{E+W}   ||   {g'})}
   &=   \mathop{\max_{ {\mu}  \:   e_{E+W}   -   {g'}\leq {\tau} }}_{\mu,\tau\geq 0}
   \left\{  \eps  {\mu}-   \sum_i   \tau_i   \right\},
\end{align}
wherein $\tau_i$ denotes the $i^{\rm th}$ element of $\tau$. Comparing Eqs.~\eqref{eq:PreDual} and~\eqref{eq:DualApp} shows that 
${\mu}   =    g'   \cdot   e_{E+W}$ and $\tau=0$ are the optimal choices in the dual formulation. 

Acting on each side of the constraint,
$\mu \: e_{E+W}   \leq   g'$, with $\mathcal{E}$ yields
$\mu \mathcal E(  e_{E+W} )   \leq   g_R   \otimes   g'$.
Therefore, $\mu=  g'  \cdot   e_{E+W}$ and $\tau=0$ are feasible for
 $\Dh^{1 - \delta}(\mathcal E( e_{E+W} )   ||    g_R  \otimes   g')$:
\begin{align}
   e^{-\Dh^{1 - \delta}(\mathcal E( e_{E+W} )   ||   g_R   \otimes   g')}
   \geq \delta\, \frac{e^{-\beta (E+W)}}{Z}
\end{align}
for all $\delta \in [0, 1]$.
Since $\tfrac12\|   r   \otimes e_E   -\mathcal E( e_{E+W} )\|_1\leq \eps$,
\begin{equation}
   \Big|  \sum_i  [Q  \{ r   \otimes e_E  -  \mathcal E(  e_{E+W}  )  \} ]_i   \Big|
   \leq \eps
\end{equation}
for every $Q$.
Suppose $Q$ is the optimal choice in 
${\Dh^{1 - \eta}(r   \otimes   e_E   ||   g_R   \otimes g')}$, such that 
$\sum_i   [Q   \,   (r   \otimes  e_E) ]_i   =   \eta$. 
For this $Q$, $\sum_i   [Q  \:   \mathcal E( e_{E+W} ) ]_i   \geq \eta   -   \eps$. 
Therefore, 
\begin{align}
   e^{-\Dh^{1 - \eta + \eps}(\mathcal E(e_{E+W} )   ||   g_R   \otimes   g')}
   &\leq \sum_i   [Q   \,   (g_R   \otimes   g' ) ]_i\\
   &=e^{-\Dh^{1 - \eta}(r   \otimes   e_E   ||   g_R   \otimes   g')}\\
   &=\frac{e^{-\beta E}}{Z}e^{-\Dh^{1 - \eta}(r   ||   g_R )}.
\end{align}
If $\eta=\eps+\delta$,
\begin{align}
\delta e^{-\beta W}\leq {e^{-\Dh^{1 - \eps - \delta}(r   ||   g_R )}}.
\end{align}
We have lower-bounded $\Wform (R)$ for every $\delta\in(0,1-\eps]$.  
The tightest of these bounds follows from a maximization over $\delta$.

To derive upper bound, we construct an equilibrating operation that maps 
$e_{E+W}$ to $\tilde{r}   \otimes e_E$, wherein $\tilde{r}   \approx_\eps r$, for a suitably chosen value of $W$. The work system is approximated, whereas the battery is not. The associated work-cost bound may be suboptimal. 

By Condition (c) of Proposition~\ref{prop:equimajorization}, such an equilibrating operation exists if and only if 
\begin{equation}   \label{eq:MonoK}
   K_{\rm in}(a)\geq K_{\rm out}(a)
\end{equation}
for all $a   \in \mathbbm{R}$ and for $K_R(a)$ defined as follows. In Proposition~\ref{prop:equimajorization}, the function 
\mbox{$f_a(t)  :=  \max \{0,  t - a \}$} appears in
\begin{align}
   \phi_{f_a}(R)  &:=   \sum_{i=1}^{ d_R }   g_i  f_a \left(  \frac{r_i}{g_i}  \right)  \\
   & =   \sum_{i=1}^{ d_R }   g_i  \max \left\{0,  \frac{r_i}{g_i}  -  a \right\}  \\
   & =   \sum_{i=1}^{ d_R }   \max \{0,  r_i  -  g_i a \}.
\end{align}
To simplify notation, we will relabel this sum as $K_R(a)$.
Because $r$ and $g$ are normalized, $K_R(a) = (1-a)$ if $a\leq0$. 
We can rewrite the LHS of Ineq.~\eqref{eq:MonoK} as
\begin{align}
   K_{\rm in}(a)  
   =   \left(1-  a  \frac {e^{-\beta(E+W)}}{Z}\right)_+.
\end{align}
This function is linear and satisfies $K_{\rm in}(0)=1$ and $K_{\rm in}(Ze^{\beta(E+W)})=0$. We can rewrite the RHS of Ineq.~\eqref{eq:MonoK} as
\begin{align}
K_{\rm out}(a)=\sum_i\left(\tilde{r}_i   -   g_i   a \frac {e^{-\beta E}}{Z}\right)_+.
\label{eq:Kout}
\end{align}
Just as for the input state, $K_{\rm out}(0)=1$. As a sum of convex functions, $K_{\rm out}(a)$ is convex. Thus, the condition $K_{\rm in}(a)\geq K_{\rm out}(a)$ for all $t\in \mathbbm R$ reduces to 
\begin{align}
K_{\rm out}(Ze^{\beta (E+W)})= 0.\label{eq:Kcondition}
\end{align}

Let us find a value of $W$ for which the transformation is possible. First, we construct a suitable $\tilde{r}$ from the dual form of $\Dh^{1 - \eps}(r   ||   g_R)$. 
Suppose that $\mu$ and $\tau$ are the optimal choices, so that 
\begin{align}
e^{-\Dh^{1 - \eps}(r   ||   g_R)}=  \eps   \mu-   \sum_i  \tau_i
\end{align}
and $\mu  r   -   g_R   \leq   \tau$. We define 
$r'   =   T   r T^\dagger$ for $T   =   g_R^{\nicefrac12}(g_R   +   \tau)^{-\nicefrac12}$, using the pseudoinverse (the inverse on the support). These definitions satisfy $\mu   r'   \leq   g_R$.  Let us bound the trace distance
\begin{align}
\tfrac12\|r   -   r'\|_1&=\tfrac12\sum_k |r_k   -   r'_k|\\
&=\tfrac12\sum_k|   r_k  -   g_k   r_k   (g_k+   \tau_k)^{-1}|\\
&=\tfrac12\sum_k   r_k|1-   g_k( g_k   +   \tau_k)^{-1}|\\
&=\tfrac12\sum_k   r_k\frac{\tau_k}{ g_k+   \tau_k}\\
&\leq \tfrac12\sum_k\tfrac1\mu \tau_k\\
&=\frac{ \sum_i   \tau_i  }{2\mu}\\
&\leq \tfrac\eps 2.
\end{align}
To derive the first inequality, we used the inequality $\mu  r  \leq  g_R  +  \tau$; to derive the second, $\eps   \mu-  \sum_i  \tau_i  =e^{-\Dh^{1 - \eps}(r   ||   g_R)}\geq 0$. 
Let $\tilde{r}   =   r'/  \sum_i r'_i$. By the Triangle Inequality, 
\begin{align}
   \tfrac12\|  r   -   \tilde{r}   \|_1&\leq \tfrac12\|  r   -   r'\|_1+\tfrac12\|\tilde{r}   -   r'\|_1\\
&\leq \tfrac\eps2+\tfrac12\left(\frac1{\sum_i  r'_i  }-1\right)\|  r'\|_1\\
&=  \tfrac\eps2+\tfrac12\left(1-   \sum_i  r'_i  \right)\\
&\leq \eps.
\end{align}
The fourth inequality follows from $\sum_i  r_i  -   \sum_j   r'_j   \leq \|  r   -   r'\|_1$. We have constructed a state $\tilde{r}   \approx_\eps   r$.

Moreover, $\mu \left( \sum_i  r'_i  \right)  \tilde{r}   \leq  g$. Applying this inequality to \eqref{eq:Kout} yields
\begin{align}
   K_{\rm out}(a) 
   &\leq \sum_i g_i
       \left(\frac1{\mu   \sum_i  r'_i  }-  a \frac{e^{-\beta E}}{Z}\right)_+\\
   &=\left(\frac1{\mu   \sum_i  r'_i }-  a \frac{e^{-\beta E}}{Z}\right)_+.
\end{align}
Hence $K_{\rm out}(Ze^{\beta(E+W)})$ satisfies
\begin{align}
   K_{\rm out}(Ze^{\beta(E+W)}) &\leq \left(\frac1{\mu  \sum_i  r'_i  }-e^{\beta W}\right)_+\\
   &\leq \left(\frac1{\mu(1-\eps)}-e^{\beta W}\right)_+\\
   &\leq \left(\frac{\eps}{1-\eps} e^{\Dh^{1 - \eps}(r   ||   g)}-e^{\beta W}\right)_+.
 \label{eq:Koutupper}
\end{align}
The second inequality follows from $\sum  r'_i   \geq 1-\eps$; the third, from 
$\eps  \mu   \geq   \eps  \mu-   \sum_i   \tau_i
=e^{-\Dh^{1 - \eps}(r   ||   g_R)}$. 
We can satisfy \eqref{eq:Kcondition} by choosing $W$ such that 
\begin{align}e^{\beta W} = \frac{\eps}{1-\eps} e^{\Dh^{1 - \eps}(r  ||   g_R)}.
\end{align}
Since $\Wform (R)   \leq W$, the upper bound follows directly.
\end{proof}

One can show that the upper bound is a nonnegative quantity, using 
$e^{-\Dh^{1 - \eps}(r   ||   g_R)}\leq   \eps$. 
(This inequality follows from the choice $Q=\eps \id$.)
Because $\eps\leq \frac \eps{1-\eps}$, 
$e^{-\Dh^{1  -  \eps}(r   ||   g_R)}\leq \frac \eps{1-\eps}$.
The latter implies the upper bound's nonnegativity. 
The lower bound is nonnegative in all the numerical examples we tested.

%
%
%
%
\section{Comparison of one-shot work yield and work cost}   \label{section:SecondOrderAsymp}

We will use \emph{second-order asymptotics} to show that $\Wext(R^{\otimes n})$ tends to differ from the bounds on $\Wform( R^{\otimes n} )$, and that the bounds lie arbitrarily close together, as the thermodynamic limit is approached. Consider distilling work from, or creating, $n$ copies $R^{\otimes n}$ of $R=(r,H,N)$. The work involved depends on the \emph{normal approximation} to the hypothesis-testing relative entropy \cite[Theorem 5]{li_second-order_2014} (see also~\cite{strassen_asymptotische_1962,polyanskiy_channel_2010}):\footnote{
The quantum version of Eq.~\eqref{eq:DattaLed} appears in~\cite{li_second-order_2014,tomamichel_hierarchy_2013}, but we have specialized to commuting density operators.}
\begin{equation}   \label{eq:DattaLed}
   D^\epsilon_{\rm H} (   r^{\otimes n}  ||  g_R^{\otimes n}  )
   =   n D( r || g_R )
   + \sqrt{n}  \; s(r || g_R )   \: \Phi^{-1} ( \epsilon )
   +   O( \log n ),
\end{equation}
wherein 
$g_R$ denotes the state vector of the equilibrium state associated with $R$,
the square-root of the relative entropy variance is
\begin{equation}
   s( r || g_R)   
   :=   \sqrt{ V( r  ||  g_R ) }
   =   \sqrt{  {\rm Tr} ( r [ \log r   -   \log g_R ]^2 )
                     - D ( r  ||  g_R )^2  },
\end{equation}
and the inverse error function is
\begin{equation}   \label{eq:Phi1}
   \Phi^{-1}( \epsilon )
   :=   \sup   \left\{   z   \in   \mathbbm{R}   \; \big| \;
   \frac{1}{ \sqrt{2 \pi } }  
        \int_{ - \infty }^z 
        e^{ - \frac{1}{2}   t^2   }   dt
    \leq \epsilon   \right\}.
\end{equation}
Equation~(\ref{eq:Phi1}) admits of the following interpretation.
Suppose that, if a hypothesis test is performed on the null-hypothesis state $R$, the outcome is distributed normally. The probability that a Type I error occurs equals $\Phi^{-1}(\epsilon)$.
Let us apply Eq.~(\ref{eq:DattaLed}) to Eq.~(\ref{eq:WExt}) and to Ineqs.~(\ref{eq:WForm}).

To characterize the latter expressions' approach toward $D(r || g_R)$, we evaluate the normalized differences between each $W^{\epsilon, \beta} \big(R^{\otimes n}  \big)$ and 
$\frac{1}{\beta} D(r || g_R)$, assuming $n$ is large. The actual distillable work differs from the asymptotic distillable work as
\begin{align}
   \label{eq:WCompare1}
   \lim_{n\rightarrow \infty}\frac{1}{ \sqrt{n} } \left[
      n \: \frac{1}{ \beta }   D( r || g_R )  -
      \Wext \big(  R^{\otimes n}  \big)
   \right]  
    &= 
   \lim_{n\rightarrow \infty}\frac{1}{ \beta }   \left[
      - s ( R || g_R )   \Phi^{-1}(\epsilon)
      - \frac{ O ( \log n ) }{ \sqrt{n} }   \right] 
    \\
   &   \label{eq:GainPhi}   =  \frac{1}{ \beta }   s ( R || g_R )   \Phi^{-1}(1 - \epsilon).  
\end{align}
The final equation follows from 
$\Phi^{-1} ( \epsilon )  =  - \Phi^{-1} (1 - \epsilon)$. 
If the Type I error probability is small ($\epsilon < \frac12$), Eq.~(\ref{eq:GainPhi}) is positive because the work distilled at the optimal asymptotic efficiency exceeds the work distilled at any sub-asymptotic efficiency [i.e., Eq.~(\ref{eq:WCompare1}) is positive].

The lower work-cost bound differs from the asymptotic cost by
\begin{align}
   \lim_{n\rightarrow \infty}\frac{1}{ \sqrt{n} }   \left[
      \Wform \big(  R^{\otimes n}  \big)v
      -   n \frac{1}{\beta}   D(r || g_R) \right]
   &  \geq\lim_{n\rightarrow \infty}
   \frac{1}{\beta}   \max_{ \delta   \in   (0,  1 - \epsilon ] }
   \left[   s( r || g_R ) \Phi^{-1} (1 - \epsilon - \delta )
             -   \frac{   \log \frac{1}{\delta}    }{   \sqrt{n}   }   \right]
   \nonumber  \\
   &=    \frac{1}{\beta}   \max_{ \delta   \in   (0,  1 - \epsilon ] }
   s( r || g_R ) \Phi^{-1} (1 - \epsilon - \delta )\\
   &= \label{eq:CostPhiLow} \frac{1}{ \beta } s(r || g_R)   \Phi^{-1} (1  -  \epsilon )  .
\end{align}
The first equality holds if $\delta$ grows more slowly than $e^{ \sqrt{n} }$. The last equality holds since, by the definition and monotonicity of $\Phi^{-1}$, the least possible $\delta$-value maximizes $\Phi^{-1}(1-\eps-\delta)$. 
In the limit, this difference arising from the lower bound matches that of the upper bound, as the work cost's upper bound differs from the asymptotic work cost as
\begin{align}
   \lim_{n\rightarrow \infty}\frac{1}{ \sqrt{n}  }   \left[
      \Wform \big(  R^{\otimes n}  \big)
      -   n  \frac{1}{\beta}  D(r|| g_R ) \right]
   & \leq
   \lim_{n\rightarrow \infty}\frac{1}{ \beta }   \left[
      s(r || g_R)   \Phi^{-1} (1  -  \epsilon )
      -   \frac{1}{ \sqrt{n} }   \log \left( \frac{1  -  \epsilon}{ \epsilon }  \right)   \right]
   \nonumber \\   & \label{eq:CostPhiHigh}= \frac{1}{ \beta } s(r || g_R)   \Phi^{-1} (1  -  \epsilon ) .
\end{align}
The final equality holds if $\frac{1  -  \epsilon}{ \epsilon }$ grows more slowly than $e^{ \sqrt{n} }$.

Let us compare these normalized work differences. 
If $n$ is large, the work-cost bounds exceed the optimal asymptotic work cost 
$\frac{1}{\beta}  D(r || g_R)$ by an amount proportional to $\sqrt{n}$. In contrast, 
$\frac{1}{\beta}  D(r || g_R)$ exceeds the work gain by an amount proportional to $\sqrt{n}$. The work gain and work cost differ in general, unlike in the thermodynamic limit, as in~\cite{FundLimits2,BrandaoHNOW13}. As creating $R^{\otimes n}$ requires more work than can be extracted from $R^{\otimes n}$, resources degrade.

\end{appendices}

%
%
%
%
\bibliography{MergedBib}

\end{document}